\newtheorem{theorem}{Theorem}
\newtheorem{lemma}{Lemma}
\newtheorem{corollary}{Corollary}
\newtheorem{example}{Example}
\newcommand{\rev}[1]{{\color{black}{#1}}}
\newcommand{\revv}[1]{{\color{black}{#1}}}
\newcommand{\revvv}[1]{{\color{black}{#1}}}
\newcommand{\LPRR}{\text{LPdRR}\xspace}
\newcommand{\LPRRS}{\text{LP'dRR}\xspace}
\newcommand{\LPRRSS}{\text{LPsRR}\xspace}
\newcommand{\GREEDY}{\text{IO}\xspace}
\newcommand{\LPR}{\text{LPoR}\xspace}
\newcommand{\PSO}{\text{PSO}\xspace}
\newcommand{\EXACT}{\text{EXACT}\xspace}
\definecolor{ududff}{rgb}{0.30196078431372547,0.30196078431372547,1.}
\newcommand\Label[1]{&\refstepcounter{equation}(\theequation)\ltx@label{#1}&}
\newcommand{\figuresize}{1.8in}
\begin{document}
\title{Towards Efficient Large-Scale Network Slicing: An LP \revv{Dynamic} Rounding-and-Refinement Approach\thanks{Part of this work \cite{Chen2021} has been presented at the 2021 IEEE International Conference on Acoustics, Speech and Signal Processing  (ICASSP), Toronto, Ontario, Canada, June 6--11, 2021.}}
\author{\IEEEauthorblockN{Wei-Kun Chen, Ya-Feng Liu, Fan Liu, Yu-Hong Dai, and Zhi-Quan Luo}
	\thanks{The work of W.-K. Chen was supported in part by National Natural Science Foundation of China (NSFC) under Grant 12101048. 
		The work of Y.-F. Liu was supported in part by the NSFC under Grant 12288201 and Grant 12022116.
		The work of F. Liu was supported in part by the NSFC under Grant 62101234.
		The work of Y.-H. Dai was supported in part by the NSFC  under Grant 12021001, 11991021, 11991020, and 11971372.
		The work of Z.-Q. Luo was supported in part by the National Key Research and Development Project under Grant 2022YFA1003900, and in part by the Guangdong Provincial Key Laboratory of Big Data Computing.
		(\emph{Corresponding author: Ya-Feng Liu.})}
	\thanks{W.-K. Chen is with the School of Mathematics and Statistics/Beijing Key Laboratory on MCAACI, Beijing Institute of Technology, Beijing 100081, China (e-mail: chenweikun@bit.edu.cn).
		Y.-F. Liu {and Y.-H. Dai} are with the State Key Laboratory
		of Scientific and Engineering Computing, Institute of Computational Mathematics and Scientific/Engineering Computing, Academy of Mathematics and Systems Science, Chinese Academy of Sciences, Beijing 100190, China (e-mail: \{yafliu, dyh\}@lsec.cc.ac.cn).
		F. Liu is with the Department of Electrical and Electronic Engineering, Southern University of Science and Technology, Shenzhen 518055, China (e-mail: liuf6@sustech.edu.cn).
		Z.-Q. Luo is with the Shenzhen Research Institute of Big Data and The Chinese University of Hong Kong, Shenzhen 518172, China (e-mail: luozq@cuhk.edu.cn)
	}
}

\maketitle

\begin{abstract}
	In this paper, we propose an efficient algorithm for the network slicing problem which attempts to map multiple customized virtual network requests (also called services) to a common shared network infrastructure and allocate network resources to meet diverse service requirements.
	The problem has been formulated as a mixed integer linear programming (MILP) formulation in the literature. 
	\rev{We} first propose a novel linear programming (LP) relaxation of the MILP formulation.
	We show that compared with a natural LP relaxation of the MILP formulation, the novel LP relaxation is much more compact in terms of smaller numbers of variables and constraints, and much stronger in terms of providing a better LP bound, which makes it particularly suitable to be embedded in an LP \revv{relaxation} based algorithm.
	Then we design an efficient two-stage LP \revv{dynamic} rounding-and-refinement algorithm based on this novel LP relaxation.
	In the first stage, the proposed algorithm uses \revv{an} LP \revv{dynamic} rounding procedure to place the virtual network functions of all services into cloud nodes while taking traffic routing of all services into consideration;
	in the second stage, the proposed algorithm uses an iterative LP refinement procedure to obtain a solution for traffic routing of all services with their end-to-end delay constraints being satisfied.
	Compared with the existing algorithms which either have an exponential complexity or return a low-quality solution, our proposed algorithm achieves a better trade-off between the solution quality and the computational complexity.
	In particular, the worst-case complexity of our proposed algorithm is polynomial, which makes it suitable for solving large-scale problems. 
	Numerical results demonstrate the effectiveness and efficiency of our proposed algorithm.
\end{abstract}
\begin{IEEEkeywords}
	LP Relaxation, Network Slicing, Resource Allocation, Rounding-and-Refinement.
\end{IEEEkeywords}

\section{Introduction}
Network function virtualization (NFV) plays a crucial role in the fifth generation (5G) and beyond 5G networks \cite{Mijumbi2016}.
Different from traditional networks where service functions are processed by specialized hardwares in fixed locations, NFV efficiently takes the advantage of cloud technologies to configure some specific nodes (called cloud nodes) in the network to process network service functions on-demand, and then flexibly establishes a customized virtual network for each service request.
However, as virtual network functions (VNFs) of all services run over a shared common network infrastructure, it is crucial to allocate network (e.g., cloud and communication) resources to meet the diverse service requirements.
The above resource allocation problem in the NFV-enabled network is called \emph{network slicing} in the literature.

\subsection{Related Works}
Various approaches have been proposed to solve the network slicing problem or its variants; see \cite{Chen2020}--\cite{Promwongsa2020} and the references therein.
These approaches can generally be classified into two categories: (i) exact algorithms that solve the problem to global optimality and (ii) heuristic algorithms that aim to quickly find a feasible solution for the problem.

More specifically, references \cite{Chen2020}--\cite{Addis2015} proposed the link-based mixed integer linear programming (MILP) formulations for the network slicing problem and used standard MILP solvers like Gurobi \cite{Gurobi} to solve their problem formulations.
Reference \cite{Ayoubi2019} used a logic-based Benders decomposition approach  \cite{Hooker2003} to solve the link-based MILP formulation.
References \cite{Hu2013}--\cite{Liu2017} proposed the path-based MILP formulations and used a column generation approach \cite{Conforti2014} to solve the related problems.
Though the above three approaches can solve the network slicing problem to global optimality, they generally suffer from low computational efficiency as their worst-case complexities are exponential.

Various heuristic and meta-heuristic algorithms has also been proposed to quickly obtain a feasible solution of the network slicing problem; see  \cite{Zhang2017}--\cite{Promwongsa2020}.
For instance, references \cite{Zhang2017}-\cite{Luizelli2015} simplified the solution approach by decomposing the network slicing problem into a VNF placement  subproblem (which maps VNFs into cloud nodes in the network) and a traffic routing subproblem (which finds paths connecting two adjacent VNFs in the network) and solving each subproblem separately.
To obtain a binary solution for the VNF placement subproblem, references \cite{Zhang2017}--\cite{Chowdhury2012} first solved the linear programming (LP) relaxation of the network slicing problem and then used a \revv{one-shot} rounding strategy while references \cite{Yu2008}--\cite{Luizelli2015} used some greedy heuristics (without solving any LP).
Once the VNFs are mapped \revv{into} the cloud nodes, the traffic routing subproblem is solved by using shortest path, $k$-shortest path, or multicommodity flow algorithms.
However, solving the VNF placement subproblem without taking the global information (i.e., traffic routing of all services) into account can lead to infeasibility or low-quality solutions.
Reference \cite{Mohammadkhan2015} decomposed the whole problem into several subproblems by dividing the services into several groups and iteratively solved the subproblem for each group using standard MILP solvers.
Similarly, this approach failed to take the global information (i.e., VNF placement and traffic routing of all services) into consideration and hence can also lead to infeasibility or low-quality solutions.
References \cite{Zhang2012}, \cite{Li2015}, and \cite{Abu-Lebdeh2017}--\cite{Promwongsa2020} proposed \revv{the particle swarm optimization (PSO)}, simulated annealing based, and Tabu search based meta-heuristic algorithms, respectively, to obtain a feasible solution for the network slicing problem.

To summarize, the aforementioned approaches to solve the network slicing problem either have an exponential complexity (e.g., \cite{Chen2020}--\cite{Liu2017}) or return a low-quality solution (e.g., \cite{Zhang2017}--\cite{Promwongsa2020}).
The motivation of this paper is to fill this research gap, i.e., develop an algorithm that finds a high-quality solution of the network slicing problem while still enjoys a polynomial time computational complexity.
\rev{Integer programming and LP relaxation techniques play vital roles in the developed algorithm.}

\subsection{Our Contributions}

The main contribution of this paper is the proposed algorithm, that achieves a good trade-off between high solution quality and low computational complexity, for solving the network slicing problem.
In particular,
\begin{itemize}
	\item
	      \rev{We} first propose a novel LP relaxation of the MILP problem formulation.
	      We show that compared with the natural LP relaxation of the problem (which directly relaxes the binary variables into continuous variables in $[0,1]$), the novel LP relaxation enjoys two key advantages: (i) it is much more compact in terms of smaller numbers of variables and constraints; (ii) it is much stronger in terms of providing a better LP bound, which is crucial to the effectiveness of the proposed algorithm.
	      \rev{We also analyze the integrality gap of the proposed LP relaxation.}
	\item We then develop a two-stage LP \revv{dynamic} rounding-and-refinement algorithm (\LPRR) based on the above novel LP relaxation.
	      Specifically, in the first stage, we solve the VNF placement subproblem by using an LP \revv{dynamic} rounding procedure, which takes traffic routing into account;
	      in the second stage, we solve the traffic routing subproblem by using an iterative LP refinement procedure to find a solution that satisfies the end-to-end (E2E) delay constraints of all services.
	      The proposed algorithm has a guaranteed polynomial time worst-case complexity, and thus is particularly suitable for solving large-scale problems.
	      \revv{We also analyze the approximation gap of the proposed \LPRR algorithm.}
\end{itemize}
Simulation results demonstrate the effectiveness and efficiency of our proposed LP relaxation and algorithm.
To be more specific, our simulation results demonstrate that (i)
the novel LP relaxation  significantly outperforms the natural LP relaxation in terms of the solution efficiency and providing a better LP bound;
(ii) our proposed \LPRR algorithm is more effective than the existing state-of-the-art algorithms \rev{in \cite{Chen2020},  \cite{Chowdhury2012},  \cite{Yu2008}}, \revv{and \cite{Zhang2012}}  in terms of both solution efficiency and quality.

The rest of the paper is organized as follows.
Section \ref{sec:modelformulation} briefly reviews the network slicing problem and its mathematical formulation.
Sections \ref{novellp} and \ref{sec:heur} propose the novel LP relaxation and the \LPRR algorithm for the network slicing problem, respectively.
Section \ref{sec_num} reports the computational results.
Finally, Section \ref{sec:conclusion} draws the conclusion.
\section{System model and problem formulation}
\label{sec:modelformulation}
\subsection{System Model}
Let $\mathcal{G}=\{\mathcal{I},\mathcal{L}\}$ be the connected directed network, where $\mathcal{I}=\{i\}$ and $\mathcal{L}=\{(i,j)\}$ are the sets of nodes and links, respectively.
Each link $ (i,j) $ has an expected (communication) delay $ d_{ij} $ \cite{Woldeyohannes2018,Luizelli2015,Mohammadkhan2015}, and a total data rate upper bounded by the capacity $C_{ij}$.
The set of cloud nodes is denoted as $ \mathcal{V} \subseteq \mathcal{I}$.
Each cloud node $ v $ has a computational capacity $ \mu_v $ and processing one unit of data rate requires one unit of (normalized) computational capacity, as assumed in \cite{Zhang2017}.
A set of flows $\mathcal{K}=\{k\}$ is required to be supported by the network.
The source and destination nodes of flow $k$ are denoted as $S(k)$ and $D(k)$, respectively, with $S(k),D(k)\notin \mathcal{V}$.
Each flow $ k $ relates to a customized service, which is given by a service function chain (SFC) consisting of $ \ell_k $ service functions that have to be processed in sequence by the network: $f_{1}^k\rightarrow f_{2}^k\rightarrow \cdots \rightarrow f_{\ell_k}^k$ \cite{Zhang2013,Halpern2015,Mirjalily2018}.
%
To minimize the coordination overhead, each function must be processed at exactly one cloud node, as required in \cite{Domenico2020,Zhang2017,Woldeyohannes2018}.
If function $ f^k_s $, $ s \in \mathcal{F}(k) := \{1,\ldots, \ell_k\} $, is processed by
cloud node $ v $ in $ \mathcal{V} $, the expected NFV delay is assumed to be known as $
	d_{v,s}(k) $, which includes both processing and queuing delays \cite{Woldeyohannes2018,Luizelli2015}.
For flow $ k $, the service function rates before receiving any function and after receiving {function} $ f^k_s $ are denoted as $ \lambda_0(k) $ and $ \lambda_s(k) $, respectively.
Each flow $ k $ has an E2E delay requirement, denoted as $ \Theta_k $.

The network slicing problem is to determine functional instantiation, the routes, and the associated data rates on the corresponding routes of all flows while satisfying the capacity constraints on all cloud nodes and links, the SFC requirements, and the E2E delay requirements of all flows.
Next, we shall introduce the problem formulation in details.
\subsection{Problem Formulation}
\subsubsection{VNF Placement}
We introduce the binary variable $x_{v,s}(k)$ to indicate whether or not function $f^k_s$ is processed by cloud node $v$.
\rev{Notice that in practice, cloud node $v$ may not be able to process function $f_s^k$ \cite{Chen2020,Zhang2017}, and in this case, we can simply set $x_{v,s}(k) = 0$.} 
Each function $f_s^k$ must be processed by exactly one cloud node, i.e.,
\begin{eqnarray}
	\label{onlyonenode}
	\sum_{v\in \mathcal{V}}x_{v,s}(k)=1,~\forall ~k \in \mathcal{K},~ \forall ~s\in  \mathcal{F}(k).
\end{eqnarray}
Let $y_v=1$ denote that cloud node $v$ is activated and powered on; otherwise $y_v=0$. Thus
\begin{equation}
	\label{xyrelation}
	x_{v,s}(k) \leq  y_v, ~ \forall~v \in \mathcal{V},~\forall~k \in \mathcal{K},~\forall~s \in \mathcal{F}(k).
\end{equation}
The node capacity constraints can be written as follows:
\begin{equation}
	\label{nodecapcons}
	\sum_{k\in \mathcal{K}}\sum_{s \in \mathcal{F}(k)}\lambda_s(k)x_{v,s}(k)\leq \mu_v y_v,~\forall~ v \in \mathcal{V}.
\end{equation}
%
\subsubsection{Traffic Routing}
 Let $ (k,s) $ denote the flow which is routed between the two cloud nodes hosting two adjacent functions $ f_s^k $ and $ f_{s+1}^k $.
Similar to \cite{Chen2020}, we suppose that there are at most $P$ paths that can be used to route flow $(k,s)$ and denote $\mathcal{P}=\{1, \ldots, P\}$.
\rev{The main reason for introducing notation $P$ is to model the delay of flow $(k,s)$ and  allow the corresponding traffic flow to be split into multiple paths.
	In practice, flow $(k,s)$ can be divided into an arbitrary number of paths, meaning that $P$ can be set as a sufficiently large integer.
	It is worthwhile highlighting that setting  $P=|\mathcal{L}|$ is sufficient for flow $(k,s)$ to be divided into an arbitrary number of paths. 
	\revv{This} is because, from the classical network flow theory, any traffic flow between two nodes can be decomposed into the sum of at most $|\mathcal{L}|$ routes on the paths and a circulation; see \cite[Theorem 3.5]{Ahuja1993}.}

Let $ r(k,s,p) $ be the fraction of data rate $\lambda_s(k)$ on the $ p $-th path of flow $ (k,s) $.
Then, the following constraint enforces that the total data rate between the two nodes hosting functions $ f_s^k $ and  $ f_{s+1}^k $ is equal to $ \lambda_s(k) $:
\begin{align}
	  & \sum_{p \in \mathcal{P}}  r(k, {s}, p) =  1,   ~ \forall~ k \in \mathcal{K},~\forall~s\in \mathcal{F}(k)\cup \{0\} \label{relalambdaandx11}.
\end{align}

Let $ z_{ij}(k,s,p)\in \{0,1\}$ denote whether or not link $ (i,j) $ is on the $ p $-th path of flow $ (k,s) $ and $ r_{ij}(k,s,p) $ be the associated fraction of data rate $\lambda_{s}(k)$. Then
\begin{align}
	  & r_{ij}(k, s,  p ) = r(k,s,p) z_{ij}(k, s,p ), \nonumber                                                                                                  \\
	  & ~~\forall~(i,j) \in {\mathcal{L}}, ~\forall~k \in \mathcal{K}, ~\forall~s \in \mathcal{F}(k)\cup \{0\},~\forall~p \in \mathcal{P}. \label{nonlinearcons}
\end{align}
The total data rates on link $ (i,j) $ is upper bounded by capacity $ C_{ij} $:
\begin{equation}
	\label{linkcapcons1}
	\sum_{k \in \mathcal{K}} \sum_{s\in \mathcal{F}(k) \cup \{0\}}\sum_{p \in \mathcal{P}} \lambda_{s}(k) r_{ij}(k, s,p) \leq C_{ij}, ~  \forall~(i,j) \in \mathcal{L} .
\end{equation}
\subsubsection{SFC}
To ensure that the functions of each flow $k$ are processed in the prespecified order $f_{1}^k\rightarrow f_{2}^k\rightarrow \cdots \rightarrow f_{\ell_k}^k$ and for each $s \in \mathcal{F}(k)\cup \{0\}$ and $p \in \mathcal{P}$, $\{(i,j) \mid z_{ij}(k,s,p)=1 \}$ forms a path, we need the flow conservation constraint \eqref{mediacons2} in the next page.
We remark that together with constraint \eqref{nonlinearcons}, constraint \eqref{mediacons2} also ensures that $\{(i,j) \mid r_{ij}(k,s,p) >0 \}$ forms a path.
Notice that in \eqref{mediacons2}, the term $x_{i,s+1}(k) - x_{i,s}(k)$ takes values $1$, $0$, or $-1$ depending on whether (i) node $i$ hosts function $f_{s+1}^k$ but does not host function $f_{s}^k$; (ii) node $i$ hosts \revv{both} functions $f_s^k$ and $f_{s+1}^k$ or does not host any of the two functions; or (iii) node $i$ hosts function $f_s^k$ but does not host function $f_{s+1}^k$.
In constraint \eqref{mediacons2}, when $s = 0$ and $s=\ell_k +1$, we let
\begin{equation*}
	\begin{aligned}
		x_{i,s}(k) & = & \!\!\!\!\!\left\{\begin{array}{ll}1,
			~ {\text{if}}~ i=S(k);~ \\
			0, ~{\text{otherwise}},\end{array}\right.\!\!\!\!\!\!
		{\text{and}}~
		x_{i,s}(k) & = & \!\!\!\!\!\left\{\begin{array}{ll}1,
			~ {\text{if}}~ i=D(k);~  \\
			0, ~ {\text{otherwise}},\end{array}\right.
	\end{aligned}
\end{equation*}
respectively.

\begin{figure*}[t]
	\begin{equation}
		\sum_{j: (j,i) \in \mathcal{{L}}} z_{ji}(k, s, p) - \sum_{j: (i,j) \in \mathcal{{L}}} z_{ij}(k, s,  p)=\left\{\begin{array}{ll}
			0,                        & \text{if}~ i   \in {\mathcal{I}}\backslash {\mathcal{V}}; \\
			x_{i,s+1}(k)- x_{i,s}(k), & \text{if}~  i \in {\mathcal{V}},
		\end{array} \right.	\forall~k \in \mathcal{K},~\forall~s \in \mathcal{F}(k)\cup \{0\},~ \forall~ p \in \mathcal{P} \label{mediacons2}.  
	\end{equation}
	\vspace{-5mm}
	 \hrulefill 
\end{figure*}
\subsubsection{E2E Delay}
Let $ \theta(k,s) $ denote the communication delay due to the traffic flow from the cloud node hosting function $ f^k_s $ to the cloud node hosting function $ f^k_{s+1} $. 
\revv{
By definition, $\theta(k,s)$ must be the largest delay among the $P$ paths of flow $(k,s)$, i.e.,
\begin{align}
	& \theta(k,s) = \max_{p \in  \mathcal{P}}\left\{\sum_{(i,j) \in \mathcal{{L}}}  d_{ij}  z_{ij}(k, s, p)\right\},\nonumber\\
	& \qquad \qquad\qquad\qquad\qquad\forall~k \in \mathcal{K}, ~s\in \mathcal{F}(k)\cup \{0\}. 	\label{maxdelay1}
\end{align}}%
To ensure that flow $k$'s {E2E} delay is less than or equal to its threshold $\Theta_k$, we need the following constraint:
\begin{equation}
	\label{delayconstraint}
	\theta_N(k) +\theta_L(k)  \leq \Theta_k,~\forall~k \in  \mathcal{K},
\end{equation}
where  $\theta_N(k) =  \sum_{v \in \mathcal{{V}}}\sum_{s \in \mathcal{F}(k)} d_{v,s}(k) x_{v,s}(k)$ and $\theta_L(k) = \sum_{s \in \mathcal{F}(k)\cup \{0\}} \theta(k,s)$ are the  total NFV delay on the nodes and the total communication delay on the links of flow $ k $, respectively.
\subsubsection{MILP Formulation}
\indent The network slicing problem is to minimize a weighted sum of the total power consumption of the whole cloud network (equivalent to the number of activated cloud nodes \cite{Chen2020}) and the total delay of all services:
\begin{align}
	 \min_{\substack{\boldsymbol{x},\,\boldsymbol{y},\,\boldsymbol{r},\\\boldsymbol{z},\,\boldsymbol{\theta}}} & ~~ \sum_{v \in \mathcal{V}}y_v + \sigma \sum_{k \in \mathcal{K}} (\theta_L(k) + \theta_N(k)) \nonumber \\
	   {\text{s.t.~~}} & ~~   \text{\eqref{onlyonenode}--\eqref{delayconstraint}}, \nonumber                                                         \\
	  &                  ~~ x_{v,s}(k),~y_v\in\{0,1\},\,\forall~v\in\mathcal{{V}}, ~k\in\mathcal{K}, ~s\in \mathcal{F}(k), \nonumber       \\
	  &                  ~~ r(k,s,p),~r_{ij}(k, s, p )\geq 0,~z_{ij}(k, s, p )\in \{0,1\}, \nonumber                                       \\
	  &                  ~~~~ \forall~(i,j)\in \mathcal{L},~k\in \mathcal{K},~s\in \mathcal{F}(k)\cup \{0\},~p \in \mathcal{P}, \nonumber \\
	  &                 ~~ \theta(k,s)\geq 0,~\forall~k \in \mathcal{K},~s \in \mathcal{F}(k)\cup \{0\},
	\label{mip}
	\tag{{\rm{MINLP}}}
\end{align}
where $\sigma$ is a small positive value, meaning that the total power consumption term is much more important than the total delay term. 
	Indeed, as shown in \cite[Proposition 2]{Chen2020}, the total number of activated cloud nodes returned by solving problem \eqref{mip} with a sufficiently small positive value $\sigma$ is equal to the total number of activated cloud nodes returned by solving problem \eqref{mip} with $\sigma=0$.
	We remark that incorporating the total delay term into the objective function is also important in the following sense.
	First, the problem of minimizing the total number of activated cloud nodes often has multiple solutions and the total delay term can be regarded as a regularizer to make the problem have a unique solution as observed in our simulation results.
	Second, it will help to avoid cycles in each path connecting the two adjacent service functions in the solution of the problem.
	\revv{Third, adding the total delay term can help to return a solution with a much smaller delay, as illustrated in \cite{Chen2020}.}
	
%

\revv{The above problem is a mixed integer nonlinear programming \eqref{mip} problem due to the nonlinearity in constraints \eqref{nonlinearcons} and \eqref{maxdelay1}.}
However, constraint \eqref{nonlinearcons} can be equivalently linearized \cite{Glover1975}.
Indeed, from \eqref{relalambdaandx11}, we have $ 0\leq r(k, s,p)  \leq 1 $, which, together with $ z_{ij}(k,s,p) \in \{0,1\} $, implies that
\begin{equation*}
	0 \leq r_{ij}(k,s,p)  \leq r(k, s,p)  \leq 1.
\end{equation*}
Then constraint \eqref{nonlinearcons} can be equivalently reformulated as:
\begin{align}
	  & r_{ij}(k,s,p) \geq  z_{ij}(k, s,p) + r(k,s,p) -1,    \label{bilinear1}
	\\
	  & r_{ij}(k,s,p) \leq   z_{ij}(k, s,p) ,                          \label{bilinear2}
	\\
	  & r_{ij}(k,s,p) \leq  r(k,s,p).                                    \label{bilinear3}
\end{align}
Together with $r_{ij}(k,s,p) \geq 0$, the above three constraints ensure that if $ z_{ij}(k, s,p) =1 $, we have $  r_{ij}(k,s,p) = r(k,s,p) $; otherwise $  r_{ij}(k,s,p) = 0  $. 
\revv{
	Constraint \eqref{maxdelay1} can also be equivalently linearized as 
	\begin{align}
		& \theta(k,s) \geq \sum_{(i,j) \in \mathcal{{L}}}  d_{ij}  z_{ij}(k, s, p),\nonumber                                                      \\
		& \qquad\qquad  \forall~k \in \mathcal{K}, ~ \forall~s \in \mathcal{F}(k) \cup \{0\},~\forall ~p \in \mathcal{P} \label{consdelay2funs1}.
	\end{align}
	Obviously,  constraint \eqref{maxdelay1} implies \eqref{consdelay2funs1}.
	On the other hand, due to the nonnegative objective coefficient $\sigma$ of $\theta(k,s)$ in the objective function, there always exists an optimal solution $(\boldsymbol{x}, \boldsymbol{y}, \boldsymbol{r}, \boldsymbol{z}, \boldsymbol{\theta})$ of problem \eqref{mip} (with \eqref{maxdelay1} replaced by \eqref{consdelay2funs1}) 
	such that  \eqref{maxdelay1} holds for all $k\in \mathcal{K}$ and $s\in \mathcal{F}(k)\cup \{0\}$.} %
\revv{Hence} we can present an equivalent mixed integer \emph{linear} programming formulation:
\begin{align}
 	\min_{\substack{\boldsymbol{x},\,\boldsymbol{y},\,\boldsymbol{r},\\\boldsymbol{z},\,\boldsymbol{\theta}}}   & ~~\sum_{v \in \mathcal{V}}y_v + \sigma \sum_{k \in \mathcal{K}} (\theta_L(k) + \theta_N(k)) \nonumber                                    \\
	   {\text{s.t.~~}}& ~~ \text{\eqref{onlyonenode}--\eqref{relalambdaandx11}}, \revv{~\text{\eqref{linkcapcons1}--\eqref{mediacons2}},~ \text{\eqref{delayconstraint}--\eqref{consdelay2funs1}}},\nonumber \\
	  &                  ~~ x_{v,s}(k),~y_v\in\{0,1\},\,\forall~v\in\mathcal{{V}}, ~k\in\mathcal{K}, ~s\in \mathcal{F}(k), \nonumber                                     \\
	  &                  ~~ r(k,s,p),~r_{ij}(k, s, p )\geq 0,~z_{ij}(k, s, p )\in \{0,1\}, \nonumber                                                                     \\
	  &                  ~~~~ \forall~(i,j)\in \mathcal{L},~k\in \mathcal{K},~s\in \mathcal{F}(k)\cup \{0\},~p \in \mathcal{P}, \nonumber                               \\
	  &                  ~~ \theta(k,s)\geq 0,~\forall~k \in \mathcal{K},~ s \in \mathcal{F}(k)\cup \{0\}. \tag{\rm{MILP}}
	\label{newmip}
\end{align}
Problem \eqref{newmip} can be solved to global optimality using standard MILP solvers like Gurobi \cite{Gurobi}.

It is worthwhile highlighting that  formulation \eqref{newmip} is equivalent to our recently proposed formulation (NS-II) in \cite{Chen2020}.
However, to present the SFC constraints (cf. \eqref{mediacons2}), reference \cite{Chen2020} introduced an equivalent but larger virtual network, and the formulation was built upon the virtual network.
In this paper, we do not introduce the virtual network, and the proposed formulation \eqref{newmip} is built upon the original network.
As a result, the number of variables and constraints in the proposed formulation \eqref{newmip}  are much smaller than that in \cite{Chen2020}.


\subsection{Complexity Analysis}

The computational complexity of the (related) network slicing problem has been studied in the literature. 
In particular, references \cite{Amaldi2016} and \cite{Rost2020} studied a related problem where each slice (service) can be a general virtual graph 
and showed that the corresponding problem is strongly NP-hard when $P =1$ (i.e., only a single path is allowed to transmit the data flow of each service).
References \cite{Zhang2017} and \cite{Paschos2018} showed the strong NP-hardness of problem \eqref{newmip} when the nodes' capacities are limited and there is only a single service (but the number of functions in the services' SFC is not a constant), respectively.
The following theorem, however, shows the (strong) NP-hardness of problem \eqref{newmip} in two new very special cases and thus reveals the intrinsic difficulty of solving it.
This motivates us to develop efficient algorithms for approximately solving problem \eqref{newmip}, especially when the problem's dimension is large.

\begin{theorem}
	\label{NPhardness}
	(i) Problem \eqref{newmip} is NP-hard even when there is only a single service and there is only a single function in the SFC of this service.
	(ii) Problem \eqref{newmip} is strongly NP-hard even when each node's capacity, link's capacity, and service's E2E delay threshold are \rev{infinite}.
\end{theorem}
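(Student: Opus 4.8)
The plan is to prove the two parts by reductions from classical NP-hard problems that exploit very different aspects of \eqref{newmip}: \textsc{Partition} for part~(i), whose numbers are hidden in the link delays (which is why only ordinary, not strong, NP-hardness is claimed there), and \textsc{Set Cover} for part~(ii), whose combinatorial structure survives the removal of all capacity and delay constraints.

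For part~(i), given a \textsc{Partition} instance with positive integers $a_1,\dots,a_n$ and $A=\sum_i a_i$, I would build a network with one cloud node $v$ and two non-cloud nodes $S(k),D(k)$, so that the lone function $f_1^k$ is forced onto $v$ by \eqref{onlyonenode}. The only $S(k)$--$v$ connection is a chain of $n$ ``diamond'' gadgets, the $i$-th being a pair of parallel links, one of delay $a_i$ and one of delay $0$, each of capacity $1$; $v$ and $D(k)$ are joined by a single zero-delay link of ample capacity, all NFV delays are $0$, the pre-processing rate is $\lambda_0(k)=2$, and $\Theta_k=A/2$. By \eqref{mediacons2} each of the $P\ge 2$ paths of flow $(k,0)$ is a genuine $S(k)$--$v$ path picking exactly one link per gadget, and summing the capacity constraints \eqref{linkcapcons1} over the two links of gadget $i$ (whose traffic totals $\lambda_0(k)=2$) forces each link to carry exactly one unit, i.e.\ $\sum_{p:\,p\text{ uses the }a_i\text{-link}} r(k,0,p)=\tfrac{1}{2}$ for every $i$. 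Multiplying the $i$-th such identity by $a_i$ and summing over $i$, and writing $\delta_p$ for the delay of path $p$, gives $\sum_p r(k,0,p)\,\delta_p=A/2$; since \eqref{maxdelay1}--\eqref{delayconstraint} require $\delta_p\le\Theta_k=A/2$ for all $p$ while $\sum_p r(k,0,p)=1$, every path with positive flow must have $\delta_p=A/2$, i.e.\ its $a_i$-links form a subset summing to $A/2$. Conversely a \textsc{Partition} solution yields a feasible two-path routing. Hence \eqref{newmip} is feasible if and only if the \textsc{Partition} instance is a yes-instance, and since deciding feasibility of \eqref{newmip} is a special case of solving it, the latter is NP-hard.

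For part~(ii), set all node capacities $\mu_v$, link capacities $C_{ij}$, delay thresholds $\Theta_k$ to $+\infty$ and all link and NFV delays to $0$. Then \eqref{nodecapcons}, \eqref{linkcapcons1} and \eqref{delayconstraint} are vacuous, the flow-conservation/SFC constraints \eqref{relalambdaandx11}--\eqref{mediacons2} are always satisfiable by routing each subflow along an arbitrary path of the (connected) network, and $\theta_N(k)=0$, $\theta(k,s)=0$ are feasible and optimal, so the objective equals exactly $\sum_v y_v$ and the sole binding requirement is that each function be hosted at an activated cloud node among those allowed for it (a node unable to process a function being modeled by fixing the corresponding $x$-variable to $0$). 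I would then reduce from \textsc{Set Cover}: given a ground set $U$ and subsets $S_1,\dots,S_m$ with $\bigcup_j S_j=U$, take the cloud nodes to be $\{S_1,\dots,S_m\}$, add non-cloud nodes $S,D$ joined to all of them by zero-delay, infinite-capacity links, and create for each $u\in U$ a single-function service $k_u$ with $S(k_u)=S$, $D(k_u)=D$, whose only allowed cloud nodes are $\{S_j:u\in S_j\}$. A set of activated cloud nodes is feasible precisely when it hits every such allowed set, i.e.\ when the corresponding subfamily covers $U$; therefore the optimal value of \eqref{newmip} equals the minimum set-cover size, and since \textsc{Set Cover} is strongly NP-hard and the construction uses only small numbers, \eqref{newmip} is strongly NP-hard in this regime.

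The routine parts are the gadget bookkeeping and the two verifications of feasibility; the one genuinely delicate point is in part~(i): one must choose the gadget capacities and $\lambda_0(k)$ so that \eqref{linkcapcons1} \emph{exactly} pins the per-gadget split, and then the averaging identity above must be invoked to rule out any ``cheating'' by using more than two paths, by spreading a single unit of traffic across several paths, or by routing through cyclic or degenerate paths. Part~(ii) is conceptually straightforward once one observes that infinite capacities and thresholds collapse \eqref{newmip} to a hitting-set problem, the only care being to keep the auxiliary delays at $0$ so that the $\sigma$-weighted term in the objective plays no role (alternatively one may invoke \cite[Proposition~2]{Chen2020}).
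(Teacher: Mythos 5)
Your proposal is correct. The paper itself defers the proof of Theorem \ref{NPhardness} to Section I of its companion technical report \cite{Chen2021d}, so a line-by-line comparison is not possible from the text given; however, both of your reductions are consistent with the machinery the paper uses elsewhere. For part (ii), your Set Cover construction is exactly the identification the paper makes in Section \ref{integralitygapanalysis} and in its approximation-ratio theorem, where problem \eqref{newmip} with infinite node capacities is reduced to the generic set covering problem \eqref{SC1}; since Set Cover involves no numerical parameters, this immediately yields strong NP-hardness. For part (i), your Partition gadget is sound: the per-gadget capacity argument pins $\sum_{p} r_{a_i}(k,0,p)=1/2$ on every weighted link, the averaging identity then forces each positively-weighted path to meet the delay budget $A/2$ with equality (because \eqref{consdelay2funs1} bounds every path's delay by $\theta(k,0)\le\Theta_k$ while the flow-weighted average already equals $A/2$), and the selected $a_i$-links of any used path therefore form an exact partition; this is in the same spirit as the NP-completeness of the min-max-delay routing problem \cite{Liu2017a} that the paper invokes for its second-stage subproblem. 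Two cosmetic points only: parallel links between the same node pair should formally be subdivided by a dummy node so that links remain indexed by ordered pairs (the paper's own Example \ref{example1} tolerates parallel links, so this is harmless), and in part (ii) the sets $\mathcal{V}(k)$ of admissible cloud nodes must be regarded as part of the instance, which the paper explicitly permits (cf.\ the remark after constraint \eqref{onlyonenode} and the definition of $\mathcal{V}(k)$ in Section \ref{integralitygapanalysis}).
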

\begin{proof}
	\rev{The proof can be found in Section I of \cite{Chen2021d}.}
\end{proof}

\section{Proposed Novel LP Relaxation}
\label{novellp}

In this section, we shall derive a novel LP relaxation for the network slicing problem \eqref{newmip}, which will be employed in the proposed \LPRR algorithm in Section \ref{sec:heur}.
\subsection{Natural LP Relaxation}
As the network slicing problem can be formulated as the MILP problem \eqref{newmip}, simply relaxing the binary variables $\{y_v\}$, $\{x_{v,s}(k)\}$, and $\{z_{ij}(k,s,p)\}$ to continuous variables in $[0,1]$ will give a natural LP relaxation:
\begin{align}
 	\min_{\substack{\boldsymbol{x},\,\boldsymbol{y},\,\boldsymbol{r},\\\boldsymbol{z},\,\boldsymbol{\theta}}} &  ~~ \sum_{v \in \mathcal{V}}y_v + \sigma \sum_{k \in \mathcal{K}} (\theta_L(k) + \theta_N(k)) \nonumber                                    \\
	   {\text{s.t.~~}} & ~~ \text{\eqref{onlyonenode}--\eqref{relalambdaandx11}}, ~ \revvv{\text{\eqref{linkcapcons1}--\eqref{mediacons2}},~ \text{\eqref{delayconstraint}--\eqref{consdelay2funs1}}}, \nonumber \\
	  &                  ~~ x_{v,s}(k),~y_v\in[0,1],\,\forall~v\in\mathcal{{V}}, ~k\in\mathcal{K}, ~s\in \mathcal{F}(k), \nonumber                                        \\
	  &                  ~~ r(k,s,p),~r_{ij}(k, s, p )\geq 0,~z_{ij}(k, s, p )\in [0,1], \nonumber                                                                        \\
	  &                  ~~ ~~ \forall~(i,j)\in \mathcal{L},~k\in \mathcal{K},~s\in \mathcal{F}(k)\cup \{0\},~p \in \mathcal{P}, \nonumber                                \\
	  &                  ~~ \theta(k,s)\geq 0,~\forall~k \in \mathcal{K}, ~s \in \mathcal{F}(k)\cup \{0\}.\tag{\rm{LP-I}}
	\label{milplp}
\end{align}
However, the above natural LP relaxation has two disadvantages, which are detailed as follows.

First, \emph{the natural LP relaxation \eqref{milplp} is very weak in terms of providing a poor LP bound}.
Indeed, while it has been shown that the three constraints \eqref{bilinear1}--\eqref{bilinear3} and $r_{ij}(k,s,p) \geq 0$ are equivalent to the nonlinear constraint \eqref{nonlinearcons} when $z_{ij}(k,s,p) \in \{0,1\}$,  it can only be shown that \eqref{bilinear1}--\eqref{bilinear3} is a relaxation of the nonlinear constraint \eqref{nonlinearcons} when $z_{ij}(k,s,p) \in (0,1)$.
Therefore, the feasible region of the natural relaxation \eqref{milplp} can be larger than that of the nonlinear relaxation of problem \eqref{mip}:
\begin{align}
 \min_{\substack{\boldsymbol{x},\,\boldsymbol{y},\,\boldsymbol{r},\\\boldsymbol{z},\,\boldsymbol{\theta}}} &  ~~ \sum_{v \in \mathcal{V}}y_v + \sigma \sum_{k \in \mathcal{K}} (\theta_L(k) + \theta_N(k)) \nonumber \\
	   {\text{s.t.~~}} &   ~~ \text{\eqref{onlyonenode}--\eqref{delayconstraint}}, \nonumber                                                         \\
	  &                  ~~ x_{v,s}(k),~y_v\in[0,1],\,\forall~v\in\mathcal{{V}},~k\in\mathcal{K}, ~s\in \mathcal{F}(k), \nonumber          \\
	  &                  ~~ r(k,s,p),~r_{ij}(k, s, p )\geq 0,~z_{ij}(k, s, p )\in [0,1], \nonumber                                         \\
	  &                 ~~ ~~ \forall~(i,j)\in \mathcal{L},~k\in \mathcal{K},~s\in \mathcal{F}(k)\cup \{0\},~p \in \mathcal{P}, \nonumber \\
	  &                  ~~ \theta(k,s)\geq 0,~\forall~k \in \mathcal{K},~s \in \mathcal{F}(k)\cup \{0\},
	\label{nlp}
	\tag{\rm{NLP}}
\end{align}
and hence the natural LP relaxation \eqref{milplp} is weaker than the nonlinear relaxation \eqref{nlp}.
In other words, the optimal objective value of relaxation \eqref{milplp} can be smaller than that of relaxation \eqref{nlp}.
Please see the following example for an illustration.
\begin{example}
	\label{example1}
	Consider the toy example in Fig. \ref{example}.
	There are two links from node $S$ to node $D$.
	The communication capacities are both $0.5$ and the communication delays are $1$ and $2$, respectively.
	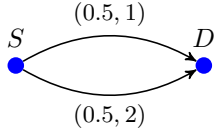
\begin{figure}[t]
		\begin{center}
			\begin{tikzpicture}[->,>=stealth',shorten >=0pt,auto,node distance=10cm,
					semithick]
				\tikzstyle{every state}=[fill=blue,draw=none,scale=0.25,text=white]
				\node[state,label=above:{$S$}] (A)                  {};
				\node[state,label=above:{$D$}]         (B) [right of=A] {};

				\path
				(A) edge[bend left]              node {\small$(0.5,1)$} (B)
				(A) edge[bend right,below]              node {\small$(0.5,2)$} (B)
				;
			\end{tikzpicture}
			\caption{A toy network example where the pair $(c,d)$ over each link denotes that the communication capacity and the communication delay of the corresponding link are $c$ and $d$, respectively.}
			\label{example}
		\end{center}
	\end{figure}
	Suppose that there is a flow with its source and destination nodes being  $S$ and $D$, respectively, and its data rate being $1$.
	For simplicity of presentation, we assume that there does not exist any function in the flow's SFC and the problem is to route the flow to minimize the delay from node $S$ to node $D$.
	\revv{
		Since the capacities on both of links in Fig. \ref{example} are equal to 0.5, the traffic flow from the node $S$ to the node $D$ with data rate being 1 must use two paths (i.e., one uses the top link and the other uses the bottom link) on which both data rates are 0.5.
		As presented in \eqref{maxdelay1}, the E2E delay of a traffic flow between two nodes is the largest delay of all paths used by this flow. 
		Consequently,  the E2E delay of this traffic flow is equal to $\max\{1,2 \}=2$ (as the delays of the top and bottom links are equal to $1$ and $2$, respectively).
	}
	
	Let $P=2$ in relaxations \eqref{milplp} and \eqref{nlp}.
	Solving the nonlinear relaxation \eqref{nlp} returns a solution with the E2E delay (i.e., the objective value) being $1.5$, whereas
	solving the natural LP relaxation \eqref{milplp} returns a solution with the E2E delay being $1.25$; see Appendix \ref{detailsofex1} for more details.
	This example clearly shows that the natural LP relaxation \eqref{milplp} can be strictly weaker than the nonlinear relaxation \eqref{nlp}.
\end{example}

\revv{We remark that the weakness of problem \eqref{milplp} is due to the split of traffic flows in problem \eqref{newmip}, i.e., multiple paths are possibly used to carry out the traffic flow between the two nodes hosting two adjacent functions.
Indeed, if only a single path is used to carry out the traffic flow between any two nodes hosting two adjacent functions (i.e., $\mathcal{P}=\{1\}$ in \eqref{newmip}), then the two relaxations \eqref{nlp} and \eqref{milplp} \revvv{are} equivalent. 
In this case, \eqref{relalambdaandx11} and \eqref{nonlinearcons} reduce to 
$r(k,s,1) =1$ and $r_{ij}(k,s,1)= z_{ij}(k,s,1)$, respectively.
This, together with $z_{ij}(k,s,1) \in [0,1]$, implies \eqref{bilinear1}--\eqref{bilinear3} (with $p=1$).
Conversely, by \eqref{bilinear1}, \eqref{bilinear2}, and $r(k,s,1) =1$, we have $r_{ij}(k,s,1)= z_{ij}(k,s,1)$, showing that \eqref{nonlinearcons} holds.
}

Second, \emph{the dimension of the natural LP relaxation \eqref{milplp} is very large and there is some redundancy in it}.
To see this, recall that in problem \eqref{newmip}, in order to model different paths for flow $(k,s)$, we introduce the notation $\{ p  \mid  p \in \mathcal{P} \}$ and use $\{(i,j) \mid r_{ij}(k,s,p)>0\}$ to represent the $p$-th path of flow $(k,s)$ (cf. \eqref{nonlinearcons} and \eqref{mediacons2}).
However, as $z_{ij}(k,s,p) \in [0,1]$ in the natural LP relaxation \eqref{milplp}, the traffic flow \revv{$\{(i,j) \mid r_{ij}(k,s,p)>0\}$} can be split into multiple paths.
This reveals that there is some redundancy in the natural LP relaxation \eqref{milplp}, i.e., we do not need to introduce the notation $\{p \mid p\in \mathcal{P}\}$ to model different paths for flow $(k,s)$.

The weakness and the large problem size of relaxation \eqref{milplp} make it unsuitable to be embedded in an LP rounding algorithm, i.e., it can lead to a bad performance of an LP rounding algorithm; see the result in Section \ref{sec_num} further ahead.
\subsection{Novel LP Relaxation}
To overcome the above two disadvantages of relaxation \eqref{milplp}, we propose a new LP relaxation for problem \eqref{newmip} in this subsection.
\revv{We begin with the following lemma stating a property of relaxation \eqref{nlp}, which plays a crucial role in deriving the new LP relaxation.}
\begin{lemma}
	\label{lemma1}
	\revv{Let $(\boldsymbol{x},\boldsymbol{y},\boldsymbol{r},\boldsymbol{z},\boldsymbol{\theta})$ be a feasible solution of  relaxation \eqref{nlp}.
	Then 
	\begin{align}
		& \bar{r}_{ij}(k,s) := \sum_{p \in \mathcal{P}}r_{ij}(k,s,p)\leq 1,\nonumber\\
		& \qquad~~~ \forall~(i,j)\in \mathcal{L},~\forall~k \in \mathcal{K}, ~\forall~s \in \mathcal{F}(k)\cup \{0\}, \label{rbardef}
	\end{align}
	 and $(\boldsymbol{x}',\boldsymbol{y}',\boldsymbol{r}',\boldsymbol{z}',\boldsymbol{\theta}')$ defined by 
	\begin{align}
		& \boldsymbol{x}'=\boldsymbol{x},~ \boldsymbol{y}'=\boldsymbol{y},\label{xydef}\\
		& r'(k,s, p)~=\left\{\begin{array}{ll}1,
			& \qquad\qquad {\text{if}}~ p=1 ;~ \\
			0, & \qquad\qquad\text{otherwise},\end{array}\right.\label{rdef11}                                                                          \\
		& r'_{ij}(k,s, p)=\left\{\begin{array}{ll}\bar{r}_{ij}(k,s),
			& \quad {\text{if}}~ p=1 ;~ \\
			0, & \quad {\text{otherwise}},\end{array}\right.	\label{rdef21}                                                                 \\
		& z'_{ij}(k,s, p)=\bar{r}_{ij}(k,s),\label{zdef1}                                                                                             \\
		& \theta'(k,s) = \sum_{(i,j) \in \mathcal{L}} d_{ij}\bar{r}_{ij}(k,s), \label{thetadef1}\\
		& ~~~ \forall~(i,j)\in \mathcal{L},~\forall~k \in \mathcal{K}, ~\forall~s \in \mathcal{F}(k)\cup \{0\}, ~\forall~p \in \mathcal{P},\nonumber
	\end{align}
	is also a feasible solution of relaxation \eqref{nlp} with at most the same objective value as that of $(\boldsymbol{x},\boldsymbol{y},\boldsymbol{r},\boldsymbol{z},\boldsymbol{\theta})$.
}
\end{lemma}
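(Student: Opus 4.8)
The plan is to verify directly that the point $(\boldsymbol{x}',\boldsymbol{y}',\boldsymbol{r}',\boldsymbol{z}',\boldsymbol{\theta}')$ specified in \eqref{xydef}--\eqref{thetadef1} is feasible for \eqref{nlp} and has objective value no larger than that of $(\boldsymbol{x},\boldsymbol{y},\boldsymbol{r},\boldsymbol{z},\boldsymbol{\theta})$. The single idea behind every step is that summing the per-path identity \eqref{nonlinearcons} over $p\in\mathcal{P}$ and using $\sum_{p}r(k,s,p)=1$ from \eqref{relalambdaandx11} collapses per-path quantities into the aggregated flow $\bar{r}_{ij}(k,s)$ while preserving the structure of the constraints.

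First I would prove the bound \eqref{rbardef}: combining \eqref{nonlinearcons} with $z_{ij}(k,s,p)\in[0,1]$ gives $r_{ij}(k,s,p)=r(k,s,p)z_{ij}(k,s,p)\le r(k,s,p)$, and summing over $p$ and invoking \eqref{relalambdaandx11} yields $\bar{r}_{ij}(k,s)\le 1$; in particular $z'_{ij}(k,s,p)=\bar{r}_{ij}(k,s)\in[0,1]$. The remaining box and nonnegativity constraints on the primed variables follow at once from the construction and from $\boldsymbol{x}'=\boldsymbol{x}$, $\boldsymbol{y}'=\boldsymbol{y}$. Constraints \eqref{onlyonenode}, \eqref{xyrelation}, \eqref{nodecapcons} involve only $\boldsymbol{x},\boldsymbol{y}$ and are inherited verbatim; \eqref{relalambdaandx11} holds for $\boldsymbol{r}'$ by construction; and \eqref{nonlinearcons} holds for the primed variables because for $p=1$ both sides equal $\bar{r}_{ij}(k,s)$ while for $p\neq 1$ both sides vanish.

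Next I would address the coupling constraints, which I expect to be the main technical part. For the link-capacity constraint \eqref{linkcapcons1}, the aggregate rate on each link is unchanged since $\sum_{p}r'_{ij}(k,s,p)=\bar{r}_{ij}(k,s)=\sum_{p}r_{ij}(k,s,p)$, so the bound $C_{ij}$ carries over. For flow conservation \eqref{mediacons2}, I would substitute \eqref{nonlinearcons} into $\sum_{j:(j,i)\in\mathcal{L}}\bar{r}_{ji}(k,s)-\sum_{j:(i,j)\in\mathcal{L}}\bar{r}_{ij}(k,s)$ to write it as $\sum_{p}r(k,s,p)\bigl(\sum_{j:(j,i)\in\mathcal{L}}z_{ji}(k,s,p)-\sum_{j:(i,j)\in\mathcal{L}}z_{ij}(k,s,p)\bigr)$, then apply \eqref{mediacons2} for the original solution: the bracket is $0$ when $i\in\mathcal{I}\setminus\mathcal{V}$, giving total $0$, and equals $x_{i,s+1}(k)-x_{i,s}(k)$ when $i\in\mathcal{V}$, giving total $(x_{i,s+1}(k)-x_{i,s}(k))\sum_{p}r(k,s,p)=x_{i,s+1}(k)-x_{i,s}(k)$ by \eqref{relalambdaandx11}; since $z'_{ij}(k,s,p)=\bar{r}_{ij}(k,s)$ for all $p$ and $\boldsymbol{x}'=\boldsymbol{x}$, this is exactly \eqref{mediacons2} for the primed solution, with the boundary conventions at $s=0,\ell_k+1$ unaffected because $\boldsymbol{x}$ is unchanged.

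Finally I would handle the delay constraints and the objective. Because $z'_{ij}(k,s,p)=\bar{r}_{ij}(k,s)$ is independent of $p$, constraint \eqref{maxdelay1} holds with equality for the primed solution and gives $\theta'(k,s)=\sum_{(i,j)\in\mathcal{L}}d_{ij}\bar{r}_{ij}(k,s)$, consistent with \eqref{thetadef1} and with $\theta'(k,s)\ge 0$. The decisive estimate is $\theta'(k,s)\le\theta(k,s)$: writing $\bar{r}_{ij}(k,s)=\sum_{p}r(k,s,p)z_{ij}(k,s,p)$ via \eqref{nonlinearcons}, interchanging the sums, and using that $\theta(k,s)$ dominates each per-path delay $\sum_{(i,j)\in\mathcal{L}}d_{ij}z_{ij}(k,s,p)$ by \eqref{maxdelay1}, one gets $\theta'(k,s)\le\sum_{p}r(k,s,p)\,\theta(k,s)=\theta(k,s)$. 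Hence $\theta'_L(k)\le\theta_L(k)$ while $\theta'_N(k)=\theta_N(k)$, so \eqref{delayconstraint} is preserved, and since $\sigma>0$ and $\sum_{v}y'_v=\sum_{v}y_v$ the objective value does not increase. The two steps requiring genuine care are the flow-conservation identity and the bound $\theta'(k,s)\le\theta(k,s)$; everything else is bookkeeping.
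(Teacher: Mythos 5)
Your proposal is correct and follows essentially the same route as the paper's own proof: establish $\bar{r}_{ij}(k,s)\le 1$ from \eqref{nonlinearcons} and \eqref{relalambdaandx11}, verify the trivially inherited constraints, prove flow conservation \eqref{mediacons2} by factoring $\sum_{p}r(k,s,p)$ out of the aggregated balance, and bound $\theta'(k,s)\le\theta(k,s)$ by interchanging sums and invoking \eqref{maxdelay1}. The two steps you flag as requiring care are exactly the two the paper writes out in detail, so no further changes are needed.
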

\begin{proof}
	\revv{
	For each $(i,j) \in \mathcal{L}$, $k\in \mathcal{K}$, and $s \in \mathcal{F}(k)$, we have $\bar{r}_{ij}(k,s) \leq 1$ as 
	\begin{align*}
		\sum_{p \in \mathcal{P}} r_{ij}(k,s,p) & ~=\sum_{p \in \mathcal{P}} r(k,s,p) z_{ij}(k,s,p) ~( \text{from}~\eqref{nonlinearcons})\\
		& ~\leq\sum_{p \in \mathcal{P}} r(k,s,p) ~( \text{from}~z_{ij}(k,s,p) \leq 1)                \\
		& ~= 1.~( \text{from}~\eqref{relalambdaandx11})
	\end{align*}
	By the definition of $(\boldsymbol{x}',\boldsymbol{y}',\boldsymbol{r}',\boldsymbol{z}',\boldsymbol{\theta}')$ and the feasibility of $(\boldsymbol{x},\boldsymbol{y},\boldsymbol{r},\boldsymbol{z},\boldsymbol{\theta})$ for relaxation (NLP), 
	constraints \eqref{onlyonenode}--\eqref{linkcapcons1} and \eqref{maxdelay1} hold at $(\boldsymbol{x}',\boldsymbol{y}',\boldsymbol{r}',\boldsymbol{z}',\boldsymbol{\theta}')$.}
	\revv{Constraint \eqref{mediacons2} holds true since
\begin{equation*}
		\begin{aligned}
			& & & \sum_{j: (j,i) \in \mathcal{{L}}} z'_{ji}(k, s,p) - \sum_{j: (i,j) \in \mathcal{{L}}} z'_{ij}(k, s,p)\\
			& = &   & \sum_{j: (j,i) \in \mathcal{{L}}} \sum_{p \in \mathcal{P}}r_{ji}(k, s,p) - \sum_{j: (i,j) \in \mathcal{{L}}} \sum_{p \in \mathcal{P}}r_{ij}(k, s,p)  \\
			&   &   & \qquad \qquad\qquad\qquad\qquad\qquad\qquad ( \text{from}~\eqref{rbardef}~\text{and}~\eqref{zdef1})                                                                 \\
			& = &   & \sum_{j: (j,i) \in \mathcal{{L}}} \sum_{p \in \mathcal{P}}r(k,s,p)z_{ji}(k, s,p) -                                                                   \\
			&   &   & \qquad\qquad \sum_{j: (i,j) \in \mathcal{{L}}} \sum_{p \in \mathcal{P}}r(k,s,p)z_{ij}(k, s,p)~( \text{from}~\eqref{nonlinearcons})                   \\
			& = &   & \sum_{p \in \mathcal{P}}r(k,s,p) \left ( \sum_{j: (j,i) \in \mathcal{{L}}} z_{ji}(k, s,p) - \sum_{j: (i,j) \in \mathcal{{L}}}z_{ij}(k, s,p) \right ) \\
			& = &   & \sum_{p \in \mathcal{P}}r(k,s,p) \left (  \left\{\begin{array}{ll}0,
				& {\text{if}}~ i \in \mathcal{{I}}\backslash \mathcal{{V}} ;~ \\
				x_{i,s+1}(k) - x_{i,s}(k), & {\text{if}}~i \in \mathcal{{V}},\end{array}\right.\right )                                                           \\
			&   &   & \qquad \qquad\qquad\qquad\qquad\qquad\qquad\qquad\qquad ( \text{from}~\eqref{mediacons2})                                                            \\
			& = &   & \left\{\begin{array}{ll}0,
				& {\text{if}}~ i \in \mathcal{{I}}\backslash \mathcal{{V}} ;~ \\
				x_{i,s+1}(k) - x_{i,s}(k), & {\text{if}}~i \in \mathcal{V}.\end{array}\right. ~( \text{from}~\eqref{relalambdaandx11})
		\end{aligned}
	\end{equation*}
To prove that constraint \eqref{delayconstraint} holds true at $(\boldsymbol{x}',\boldsymbol{y}',\boldsymbol{r}',\boldsymbol{z}',\boldsymbol{\theta}')$ and the objective value of $(\boldsymbol{x}',\boldsymbol{y}',\boldsymbol{r}',\boldsymbol{z}',\boldsymbol{\theta}')$ is less than or equal to that of
	$(\boldsymbol{x},\boldsymbol{y},\boldsymbol{r},\boldsymbol{z},\boldsymbol{\theta})$,}
	\revv{it suffices to show
	\begin{equation*}
		\begin{aligned}
			\theta'(k,s)& =    &   & \sum_{(i,j) \in \mathcal{{L}}}  d_{ij} \sum_{p \in \mathcal{P}}  r_{ij}(k, s,p) ~( \text{from}~\eqref{rbardef}~\text{and}~\eqref{thetadef1})                  \\
			& =    &   & \sum_{(i,j) \in \mathcal{{L}}}  d_{ij} \sum_{p \in \mathcal{P}}  r(k,s,p) z_{ij}(k, s,p) ~( \text{from}~\eqref{nonlinearcons}) \\
			& =    &   & \sum_{p \in \mathcal{P}}  r(k,s,p)  \sum_{(i,j) \in \mathcal{{L}}}  d_{ij} z_{ij}(k, s,p)                                      \\
			& \leq &   & \sum_{p \in \mathcal{P}}  r(k,s,p) \theta(k,s) ~( \text{from}~\eqref{maxdelay1})                                         \\
			& = &   & \theta(k,s).~( \text{from}~\eqref{relalambdaandx11})                                                                  
		\end{aligned}
	\end{equation*}
The proof is complete.}
\end{proof}
\revv{The following corollary follows directly from Lemma \ref{lemma1}.}
\begin{corollary}\label{corollary}
	\revv{Suppose that relaxation \eqref{nlp} has a feasible solution.}
	\revv{Then there exists an optimal solution $(\boldsymbol{x},\boldsymbol{y},\boldsymbol{r},\boldsymbol{z},\boldsymbol{\theta})$ of relaxation \eqref{nlp} such that}
	\begin{align}
		& \revv{r(k,s,1) = 1, ~\forall ~k\in\mathcal{K},~s \in \mathcal{F}(k)\cup \{0\},\label{tmpeq1}}\\
		& \revv{r_{ij}(k,s,p)=r(k,s,p)=0,\nonumber} \\
		& \revv{\quad \forall ~k\in\mathcal{K},~s \in \mathcal{F}(k)\cup \{0\}, ~p \in \mathcal{P}\backslash\{1\},~(i,j) \in \mathcal{L},\label{tmpeq2}}\\
		& \revv{z_{ij}(k,s,p)=r_{ij}(k,s,1),\nonumber}\\
		& \revv{\qquad\quad \forall ~k\in\mathcal{K},~s \in \mathcal{F}(k)\cup \{0\}, ~p \in \mathcal{P},~(i,j) \in \mathcal{L}.}\label{tmpeq3}
	\end{align}
\end{corollary}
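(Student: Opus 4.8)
The plan is to deduce the corollary from Lemma \ref{lemma1} in essentially one step, after first noting that an optimal solution of \eqref{nlp} exists. For the existence claim: under the standing assumption that \eqref{nlp} is feasible, its feasible region is closed (all defining constraints are (in)equalities of continuous functions) and bounded — indeed $\boldsymbol{x},\boldsymbol{y},\boldsymbol{z}\in[0,1]$ by construction, $r(k,s,p)\in[0,1]$ by \eqref{relalambdaandx11} and nonnegativity, $r_{ij}(k,s,p)\in[0,1]$ by \eqref{nonlinearcons}, and $0\le\theta(k,s)\le\Theta_k$ by \eqref{delayconstraint} together with nonnegativity of all the NFV/communication delay terms. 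Hence the feasible set is compact, the continuous (linear) objective attains its minimum, and we may fix an optimal solution $(\boldsymbol{x},\boldsymbol{y},\boldsymbol{r},\boldsymbol{z},\boldsymbol{\theta})$ of \eqref{nlp}.

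Next I would apply Lemma \ref{lemma1} to this optimal solution, obtaining $(\boldsymbol{x}',\boldsymbol{y}',\boldsymbol{r}',\boldsymbol{z}',\boldsymbol{\theta}')$ that is feasible for \eqref{nlp} with objective value no larger than that of $(\boldsymbol{x},\boldsymbol{y},\boldsymbol{r},\boldsymbol{z},\boldsymbol{\theta})$; by optimality of the latter, $(\boldsymbol{x}',\boldsymbol{y}',\boldsymbol{r}',\boldsymbol{z}',\boldsymbol{\theta}')$ is itself optimal. It then remains only to read off that this new solution has the prescribed structure: \eqref{rdef11} gives $r'(k,s,1)=1$, which is \eqref{tmpeq1}; \eqref{rdef11}--\eqref{rdef21} give $r'(k,s,p)=r'_{ij}(k,s,p)=0$ for all $p\neq 1$, which is \eqref{tmpeq2}; and \eqref{rdef21}--\eqref{zdef1} give $z'_{ij}(k,s,p)=\bar{r}_{ij}(k,s)=r'_{ij}(k,s,1)$ for all $p\in\mathcal{P}$, which is \eqref{tmpeq3}. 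Relabelling $(\boldsymbol{x}',\boldsymbol{y}',\boldsymbol{r}',\boldsymbol{z}',\boldsymbol{\theta}')$ as $(\boldsymbol{x},\boldsymbol{y},\boldsymbol{r},\boldsymbol{z},\boldsymbol{\theta})$ finishes the argument.

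I do not expect any genuine obstacle here: the entire substance is carried by Lemma \ref{lemma1}, whose construction \eqref{xydef}--\eqref{thetadef1} already outputs a solution of exactly the form demanded by \eqref{tmpeq1}--\eqref{tmpeq3}. The only point deserving a line of care is the existence of a minimizer of \eqref{nlp}; if one preferred to avoid the compactness remark, the corollary could instead be phrased (and proved verbatim from Lemma \ref{lemma1}) as ``for every feasible solution of \eqref{nlp} there is a feasible solution of the special form with no larger objective value,'' which makes the optimal-solution statement immediate once an optimal solution is assumed to exist.
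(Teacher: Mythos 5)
Your proposal is correct and matches the paper's approach: the paper simply states that the corollary ``follows directly from Lemma \ref{lemma1},'' and your argument---apply the lemma's construction to an optimal solution, observe it preserves optimality and has exactly the form \eqref{tmpeq1}--\eqref{tmpeq3}---is precisely that direct deduction. Your extra remark on attainment of the minimum via compactness of the feasible set is a small point of care the paper leaves implicit, but it does not change the route.
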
 
\revv{By Corollary \ref{corollary}, we can remove variables $\{r(k,s,p)\}$, $\{z_{ij}(k,s,p)\}$, and $\{ r_{ij}(k,s,p) ~\text{with}~p\in \mathcal{P}\backslash\{1\}\}$ (using Eqs. \eqref{tmpeq1}--\eqref{tmpeq3}) from problem \eqref{nlp}.}
%
%
In addition, constraints \eqref{relalambdaandx11} and \eqref{nonlinearcons} can be removed and  constraints \eqref{linkcapcons1}--\eqref{maxdelay1} can be replaced by
\begin{equation}
	\label{linkcapcons2}
	\tag{\rm{6}'}
	\sum_{k \in \mathcal{K}} \sum_{s\in \mathcal{F}(k) \cup \{0\}} \lambda_{s}(k) r_{ij}(k, s,1) \leq C_{ij}, ~  \forall~(i,j) \in \mathcal{L},
\end{equation}
\eqref{mediacons2-2} in the next page, 
\begin{figure*}[t]
	\begin{equation}
		\tag{\rm{7}'}
		\sum_{j: (j,i) \in \mathcal{{L}}} r_{ji}(k, s, 1) - \sum_{j: (i,j) \in \mathcal{{L}}} r_{ij}(k, s, 1)=\left\{\begin{array}{ll}
			0,                        & \text{if}~ i   \in {\mathcal{I}}\backslash {\mathcal{V}}; \\
			x_{i,s+1}(k)- x_{i,s}(k), & \text{if}~  i \in {\mathcal{V}},
		\end{array} \right.	\forall~k \in \mathcal{K},~\forall~s \in \mathcal{F}(k)\cup \{0\} \label{mediacons2-2}.
	\end{equation}
	\vspace{-5mm}
	\hrulefill
\end{figure*}
and
\begin{align}
	  & \theta(k,s)~\revv{=}~\sum_{(i,j) \in \mathcal{{L}}}  d_{ij}  r_{ij}(k, s, 1),\nonumber                                                                   \\
	  & \qquad\qquad \qquad\qquad \forall~k \in \mathcal{K}, ~ \forall~s \in \mathcal{F}(k) \cup \{0\} \label{consdelay2funs2}.\tag{\rm{8}'}
\end{align}
Then problem \eqref{nlp} reduces to the following LP problem:
\begin{align}
	   \min_{\boldsymbol{x},\,\boldsymbol{y},\,\boldsymbol{r},\,\boldsymbol{\theta}} & ~~   \sum_{v \in \mathcal{V}}y_v + \sigma \sum_{k \in \mathcal{K}} (\theta_L(k) + \theta_N(k)) \nonumber                       \\
	   {\text{s.t.~~}}                                                             &~~    \text{\eqref{onlyonenode}--\eqref{nodecapcons}},~ \text{\eqref{linkcapcons2}--\eqref{consdelay2funs2}},~ \eqref{delayconstraint}, \nonumber \\
	  &                                                                                ~~x_{v,s}(k),~y_v\in[0,1],\,\forall~v\in \mathcal{V}, ~k\in\mathcal{K},~s\in \mathcal{F}(k), \nonumber                      \\
	  & ~~                                                                               r_{ij}(k, s,1 )\in [0,1], ~\theta(k,s)\geq 0,\nonumber                                                                    \\
	  &~~                                                                                ~~~ \forall~(i,j)\in \mathcal{{L}}, ~k\in \mathcal{K},~s\in \mathcal{F}(k)\cup \{0\}.
	\label{lp}
	\tag{\rm{LP-II}}
\end{align}
\revv{Based on the above derivation, we immediately have the following theorem.}
\begin{theorem}
	\label{LPrelaxation}
	The LP problem \eqref{lp} is equivalent to the nonlinear relaxation \eqref{nlp} with $P\geq 1$.
\end{theorem}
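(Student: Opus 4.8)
The plan is to obtain the equivalence essentially for free from Lemma~\ref{lemma1} and Corollary~\ref{corollary}, so that what remains is only to check that, under the substitutions $r(k,s,1)=1$, $z_{ij}(k,s,p)=r_{ij}(k,s,1)$, and $r(k,s,p)=r_{ij}(k,s,p)=0$ for all $p\neq 1$, every constraint of \eqref{nlp} degenerates exactly into its primed counterpart in \eqref{lp}. Since the objective of both problems involves only $\boldsymbol{y}$ and $\boldsymbol{\theta}$, these substitutions never change the objective value, so I read ``equivalent'' as: \eqref{nlp} is feasible if and only if \eqref{lp} is, and in that case their optimal values coincide (more precisely, feasible points of one map to feasible points of the other with no larger objective). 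Everything is uniform in $P\geq 1$ because $1\in\mathcal{P}$ always.

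The first step is the easy (``lifting'') direction: every feasible solution of \eqref{lp} gives a feasible solution of \eqref{nlp} with the same objective value. Starting from $(\boldsymbol{x},\boldsymbol{y},\{r_{ij}(k,s,1)\},\boldsymbol{\theta})$ feasible for \eqref{lp}, define $\boldsymbol{r}'$ and $\boldsymbol{z}'$ by the substitutions above and keep $\boldsymbol{x},\boldsymbol{y},\boldsymbol{\theta}$. Then \eqref{onlyonenode}--\eqref{nodecapcons} and \eqref{delayconstraint} are inherited verbatim; \eqref{relalambdaandx11} holds because $\sum_{p\in\mathcal{P}}r(k,s,p)=1$; \eqref{nonlinearcons} holds because $r_{ij}(k,s,1)=1\cdot z_{ij}(k,s,1)$ and $0=0\cdot z_{ij}(k,s,p)$ for $p\neq 1$; \eqref{linkcapcons1} and \eqref{mediacons2} reduce to \eqref{linkcapcons2} and \eqref{mediacons2-2} respectively, since only the $p=1$ term survives and $z_{ij}(k,s,1)=r_{ij}(k,s,1)$; and \eqref{maxdelay1} holds because $\max_{p\in\mathcal{P}}\sum_{(i,j)\in\mathcal{L}}d_{ij}z_{ij}(k,s,p)=\sum_{(i,j)\in\mathcal{L}}d_{ij}r_{ij}(k,s,1)=\theta(k,s)$, the last equality being \eqref{consdelay2funs2}. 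The box and nonnegativity constraints are immediate from $z_{ij}(k,s,p)=r_{ij}(k,s,1)\in[0,1]$. Hence the optimal value of \eqref{nlp} is at most that of \eqref{lp}, and \eqref{nlp} is feasible whenever \eqref{lp} is.

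The second step is the reverse (``projection'') direction, which is exactly where Lemma~\ref{lemma1} does the work. Given any feasible solution $(\boldsymbol{x},\boldsymbol{y},\boldsymbol{r},\boldsymbol{z},\boldsymbol{\theta})$ of \eqref{nlp}, consider the candidate $\bigl(\boldsymbol{x},\boldsymbol{y},\{\bar{r}_{ij}(k,s)\},\{\theta'(k,s)\}\bigr)$ for \eqref{lp}, where $\bar{r}_{ij}(k,s)$ is as in \eqref{rbardef} and $\theta'(k,s)=\sum_{(i,j)\in\mathcal{L}}d_{ij}\bar{r}_{ij}(k,s)$ as in \eqref{thetadef1}. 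Lemma~\ref{lemma1} already shows that the associated solution $(\boldsymbol{x}',\boldsymbol{y}',\boldsymbol{r}',\boldsymbol{z}',\boldsymbol{\theta}')$ of \eqref{nlp} (with $r'_{ij}(k,s,1)=\bar{r}_{ij}(k,s)$ and $z'_{ij}(k,s,p)=\bar{r}_{ij}(k,s)$) satisfies $\bar{r}_{ij}(k,s)\in[0,1]$, satisfies \eqref{onlyonenode}--\eqref{mediacons2}, \eqref{maxdelay1} and \eqref{delayconstraint}, and has objective value no larger than the original. Because the $p\neq1$ copies of \eqref{mediacons2} and the $p\neq1$ terms of \eqref{linkcapcons1} are now redundant, and because \eqref{maxdelay1} with $z'$ independent of $p$ is exactly \eqref{consdelay2funs2}, restricting this solution to the variables kept in \eqref{lp} produces a feasible point of \eqref{lp} whose objective is no larger than the original. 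Hence the optimal value of \eqref{lp} is at most that of \eqref{nlp} and \eqref{lp} is feasible whenever \eqref{nlp} is; combining the two inequalities proves the theorem.

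Since Lemma~\ref{lemma1} is already proved, there is no real obstacle left: the only thing requiring attention is to confirm that the three primed constraints \eqref{linkcapcons2}, \eqref{mediacons2-2}, \eqref{consdelay2funs2} are \emph{simultaneously} a consequence of \eqref{linkcapcons1}, \eqref{mediacons2}, \eqref{maxdelay1} along the projection $\boldsymbol{r}\mapsto\{\bar{r}_{ij}(k,s)\}$ (the reverse direction, i.e.\ the flow-conservation and delay computation already carried out inside the proof of Lemma~\ref{lemma1}) and equivalent to them along the lifting $r_{ij}(k,s,1)\mapsto z_{ij}(k,s,p)$ (the forward direction), so that eliminating the variables $\{r(k,s,p)\}$, $\{z_{ij}(k,s,p)\}$ and $\{r_{ij}(k,s,p):p\neq1\}$ via \eqref{tmpeq1}--\eqref{tmpeq3} discards nothing. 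That is pure bookkeeping, and Corollary~\ref{corollary} packages the only nontrivial input it uses.
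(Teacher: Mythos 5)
Your proposal is correct and follows essentially the same route as the paper: the paper derives Theorem~\ref{LPrelaxation} directly from Lemma~\ref{lemma1}, Corollary~\ref{corollary}, and the elimination of the variables $\{r(k,s,p)\}$, $\{z_{ij}(k,s,p)\}$, and $\{r_{ij}(k,s,p):p\neq 1\}$ via \eqref{tmpeq1}--\eqref{tmpeq3}, exactly as you do. The only difference is that you spell out the easy lifting direction (feasible points of \eqref{lp} mapping to feasible points of \eqref{nlp} with equal objective), which the paper leaves implicit in its ``reduces to'' phrasing; this is a welcome bit of extra care, not a different argument.
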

\begin{example}
	\revv{Consider the toy example in Fig. \ref{example}.
		Let $P=2$ in relaxation \eqref{lp}.
		Solving relaxation \eqref{lp} returns a solution with the E2E delay (i.e., the objective value) being $1.5$, which is the same as the optimal value of relaxation \eqref{nlp}; see Appendix A for more details.
		This example shows that relaxations \eqref{nlp} and \eqref{lp} provide the same lower bound for formulation \eqref{newmip}.}
\end{example}

\rev{
	Next, we provide more insight into the derivation of LP relaxation \eqref{lp}. In fact, there are two sources of the nonconvexity of problem \eqref{mip}: the binary placement constraints and the multi-path routing and E2E delay constraints.
	\eqref{lp} can be derived by using convex techniques to deal with these two nonconvex constraints.  
	The way to deal with the first nonconvex constraint is to relax the binary variables $\boldsymbol{x}$ and $\boldsymbol{y}$ into continuous variables in $[0,1]$.
	As for the second one, one can remove the multi-path routing constraint (which mathematically corresponds to using the continuous variables $\{r_{ij}(k,s,1)\}$ and the related constraints \eqref{linkcapcons2}--\eqref{mediacons2-2}) and replace the largest delay of the $P$ paths in flow $(k,s)$ with the average delay (which mathematically corresponds to using constraint \eqref{consdelay2funs2}).
}

The equivalence of problems \eqref{lp} and \eqref{nlp} in Theorem \ref{LPrelaxation} (i) implies that the LP problem \eqref{lp} can also be seen as a relaxation of problem \eqref{newmip} and \rev{(ii) characterizes how tight relaxation \eqref{lp} can possibly be, i.e., relaxation \eqref{lp} provides a lower bound of the network slicing problem that is as tight as that relaxation \eqref{nlp} can provide.}
However, in sharp contrast to relaxation \eqref{nlp} which is a nonconvex problem (due to the bilinearity in constraint \eqref{nonlinearcons}), relaxation \eqref{lp} is a (convex) LP problem and can be (globally and efficiently) solved using the polynomial time interior-point method in  \cite{Ben-Tal2001}.
In addition, when compared with the natural LP relaxation \eqref{milplp}, the novel LP relaxation \eqref{lp} enjoys the following two advantages.
First, LP relaxation \eqref{lp} is also stronger than LP relaxation \eqref{milplp}, which follows from the fact that nonlinear relaxation \eqref{nlp} is stronger than  LP relaxation \eqref{milplp} and the equivalence of \eqref{nlp} and \eqref{lp} in Theorem \ref{LPrelaxation}.
\begin{corollary}
	\label{strongerLP}
	The novel LP relaxation \eqref{lp} is stronger than the natural LP relaxation \eqref{milplp}.
\end{corollary}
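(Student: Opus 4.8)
The plan is to obtain Corollary \ref{strongerLP} by chaining two facts: that the nonlinear relaxation \eqref{nlp} dominates the natural LP relaxation \eqref{milplp}, and that, by Theorem \ref{LPrelaxation}, the novel LP relaxation \eqref{lp} has exactly the same optimal value as \eqref{nlp}. Throughout I read ``stronger'' as ``provides a lower bound for \eqref{newmip} that is at least as large,'' so it suffices to compare optimal objective values; write $v_{\mathrm{I}}$, $v_{\mathrm{NLP}}$, and $v_{\mathrm{II}}$ for the optimal values of \eqref{milplp}, \eqref{nlp}, and \eqref{lp}, respectively. The target is then $v_{\mathrm{I}}\le v_{\mathrm{NLP}} = v_{\mathrm{II}}$.

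The first step is to show $v_{\mathrm{I}}\le v_{\mathrm{NLP}}$ by embedding the feasible region of \eqref{nlp} into that of \eqref{milplp}. Given any feasible point $(\boldsymbol{x},\boldsymbol{y},\boldsymbol{r},\boldsymbol{z},\boldsymbol{\theta})$ of \eqref{nlp}, every constraint of \eqref{milplp} holds for it except possibly the McCormick inequalities \eqref{bilinear1}--\eqref{bilinear3}, which in \eqref{milplp} take the place of the bilinear equality \eqref{nonlinearcons}. But \eqref{nonlinearcons}, combined with $0\le r(k,s,p)\le 1$ (a consequence of \eqref{relalambdaandx11}) and $0\le z_{ij}(k,s,p)\le 1$, forces $r_{ij}(k,s,p)=r(k,s,p)\,z_{ij}(k,s,p)$ to lie in the McCormick envelope of that product, so \eqref{bilinear1}--\eqref{bilinear3} are satisfied. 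Since the objective $\sum_{v}y_v+\sigma\sum_k(\theta_L(k)+\theta_N(k))$ depends only on $\boldsymbol{x}$, $\boldsymbol{y}$, $\boldsymbol{\theta}$, which are untouched by this embedding, the objective value is preserved; hence every feasible point of \eqref{nlp} is a feasible point of \eqref{milplp} with the same value, giving $v_{\mathrm{I}}\le v_{\mathrm{NLP}}$. Example \ref{example1} already shows this inequality can be strict, which is what makes the word ``stronger'' literally accurate rather than merely ``at least as strong.''

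The second step is immediate: Theorem \ref{LPrelaxation} asserts that \eqref{lp} is equivalent to \eqref{nlp} for every $P\ge 1$, so in particular $v_{\mathrm{NLP}}=v_{\mathrm{II}}$. Combining the two steps yields $v_{\mathrm{I}}\le v_{\mathrm{NLP}} = v_{\mathrm{II}}$, i.e.\ \eqref{lp} is stronger than \eqref{milplp}, which is the claim.

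The only real subtlety — and hence the main obstacle — is purely a matter of bookkeeping: \eqref{lp} has eliminated the variables $\{r(k,s,p)\}$, $\{z_{ij}(k,s,p)\}$, and $\{r_{ij}(k,s,p):p\neq 1\}$, so ``stronger'' cannot be stated as set inclusion of feasible regions living in a common space. I would sidestep this entirely by phrasing the comparison at the level of optimal values as above; if an explicit solution-level correspondence is wanted instead, one can lift a feasible point of \eqref{lp} to a feasible point of \eqref{nlp} using the reconstruction in Corollary \ref{corollary} (equivalently Lemma \ref{lemma1}) and then embed that point into \eqref{milplp} by the McCormick argument of the first step. Beyond this, the corollary requires essentially no new work.
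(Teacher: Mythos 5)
Your proposal is correct and follows essentially the same route as the paper: the paper also obtains the corollary by combining the fact that \eqref{nlp} is stronger than \eqref{milplp} (since \eqref{bilinear1}--\eqref{bilinear3} only relax the bilinear constraint \eqref{nonlinearcons} when $z_{ij}(k,s,p)$ is fractional) with the equivalence of \eqref{nlp} and \eqref{lp} from Theorem \ref{LPrelaxation}. Your explicit McCormick-envelope verification just spells out the first of these two facts in more detail than the paper does.
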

\noindent Second, LP relaxation \eqref{lp} is much more compact than LP relaxation \eqref{milplp}.
Indeed, the numbers of variables and constraints in  LP relaxation \eqref{lp} are $\mathcal{O} (|\mathcal{L}|\sum_{k \in \mathcal{K}} \ell_k)$ and $\mathcal{O}(\min\{|\mathcal{I}|\sum_{k\in \mathcal{K}}\ell_k, |\mathcal{L}|\})$, respectively.
These are much smaller than those in LP relaxation \eqref{milplp}, which are $\mathcal{O} (|\mathcal{L}||\mathcal{P}|\sum_{k \in \mathcal{K}} \ell_k)$.
The strongness and compactness of the proposed LP relaxation \eqref{lp} play a crucial role in the effectiveness and efficiency of the proposed \LPRR algorithm in the next section.

\subsection{Integrality Gap of Relaxation \eqref{lp}}
\label{integralitygapanalysis}

In this subsection, we analyze the integrality gap of relaxation \eqref{lp} in some special cases. 
In particular, we show that the integrality gap of relaxation \eqref{lp} in some special case (i.e., case (i) in the following Theorem \ref{integralitygap}) can be infinite, which provides more insight into the intrinsic difficulty of the original problem.
In addition, we also identify \revv{three} other special cases (i.e., \revv{cases (ii)--(iv)} in the following Theorem \ref{integralitygap}) where we can derive bounded or zero integrality gaps.

Let us first specify the special cases of interest in this subsection: each link's capacity is infinite, each link's delay is zero, and the SFC of each service contains only one function in problem \eqref{newmip}. 
In this case, the capacity constraint on all links and the E2E delay constraint of all services are automatically satisfied.
Let $\mathcal{V}(k)\subseteq \mathcal{V}$ denote the set of cloud nodes that can process the function $f_1^k$, which is the (only) function in the SFC of service $k$.
Without loss of generality, we assume that there exists a path between the source node $S(k)$ and each node in $\mathcal{V}(k)$ and between each node in $\mathcal{V}(k)$ and the destination node $D(k)$ and 
the processing capacity of node $v$ is larger than or equal to the data rate $\lambda_1(k)$ of function $f_1^k$, i.e., 
$\mu_v \geq \lambda_1(k)$ (otherwise, we can set $x_{v,1}(k)=0$).
Then, we must have 
\begin{equation}
\label{noprocesscons}
x_{v,1}(k) = 0,~\forall~k \in \mathcal{K},~ v \in \mathcal{V}\backslash \mathcal{V}(k).
\end{equation}
Hence problem \eqref{newmip} reduces to 
\begin{align}
\min_{\substack{\boldsymbol{x},\,\boldsymbol{y}}} & ~~ \sum_{v \in \mathcal{V}}y_v \nonumber \\
~~{\text{s.t.~}} &~~   \text{\eqref{onlyonenode}--\eqref{nodecapcons}},~\eqref{noprocesscons},\nonumber\\ &~~\revv{y_v\in\{0,1\},\,\forall~v\in\mathcal{{V}}}\revvv{,}\nonumber                                                         \\
&~~    x_{v,1}(k)\in \{0,1\},~\forall~v\in\mathcal{{V}}, ~k\in\mathcal{K}, 
\label{SC}
\end{align}
and relaxation \eqref{lp} reduces to 
\begin{align}
\min_{\substack{\boldsymbol{x},\,\boldsymbol{y}}} & ~~  \sum_{v \in \mathcal{V}}y_v \nonumber \\
~~{\text{s.t.~}} &~~    \text{\eqref{onlyonenode}--\eqref{nodecapcons}},~\eqref{noprocesscons},\nonumber\\
&~~\revv{y_v\in[0,1],\,\forall~v\in\mathcal{{V}}}, \nonumber                                                         \\
&~~    x_{v,1}(k)\in [0,1],~\forall~v\in\mathcal{{V}}, ~k\in\mathcal{K}.
\label{LP}
\end{align}

\begin{theorem}
	\label{integralitygap}
	Suppose that each link's capacity is infinite, each link's delay is zero, and the SFC of each service contains only one function in problem \eqref{newmip}. Then,
	\begin{itemize}
		\item [(i)] if $\mu_v = \infty$ for all $v \in \mathcal{V}$, the integrality gap of relaxation \eqref{lp} can be infinite;
		\item [(ii)] \revv{if $\mu_v = \infty$ for all $v \in \mathcal{V}$ and $|\mathcal{V}(k)| \leq u$ for all $k \in \mathcal{K}$, the integrality gap of relaxation \eqref{lp} is at most $u$};
		\item [(iii)] if $\lambda_1(k) \geq \frac{1}{t}\mu_v $ for all  $k \in \mathcal{K}$ and $v\in\mathcal{V}(k)$ where $t$ is some constant, the integrality gap of relaxation \eqref{lp} is at most $t$; and
		\item [(iv)]  if each node can host (at most) one function, the integrality gap of relaxation \eqref{lp} is zero and the original problem \eqref{newmip} can be solved to global optimality.
		
	\end{itemize}
\end{theorem}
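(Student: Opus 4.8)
The plan is to argue directly on the reduced formulations \eqref{SC} and \eqref{LP}, which are equivalent to problem \eqref{newmip} and to relaxation \eqref{lp}, respectively, under the stated hypotheses. Write $\mathrm{OPT}_{\mathrm{I}}$ and $\mathrm{OPT}_{\mathrm{LP}}$ for the optimal values of \eqref{SC} and \eqref{LP}. Two facts will be used throughout: $\mathrm{OPT}_{\mathrm{LP}}\le\mathrm{OPT}_{\mathrm{I}}$ since \eqref{LP} relaxes \eqref{SC}, and $\mathrm{OPT}_{\mathrm{I}}\le|\mathcal{K}|$ because by \eqref{onlyonenode} every feasible integer solution places each service on exactly one node, so after switching off all unused nodes it activates at most $|\mathcal{K}|$ of them. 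Thus for the upper bounds in (ii)--(iv) it suffices to lower-bound $\mathrm{OPT}_{\mathrm{LP}}$ (and, for (ii), also to round an LP solution cheaply), while for (i) it suffices to exhibit instances with $\mathrm{OPT}_{\mathrm{I}}/\mathrm{OPT}_{\mathrm{LP}}$ arbitrarily large.

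For part (i), observe that $\mu_v=\infty$ makes \eqref{nodecapcons} vacuous, so \eqref{SC} is exactly minimum set cover (ground set $\mathcal{K}$, one set $\{\,k:\,v\in\mathcal{V}(k)\}$ per node $v\in\mathcal{V}$) and \eqref{LP} is its standard LP relaxation, whose integrality gap is $\Theta(\log|\mathcal{K}|)$ and hence unbounded. To keep this self-contained I would instead display the explicit family: let $\mathcal{V}=\{1,\dots,m\}$ on a network with all link delays zero and all link capacities infinite (so routing is unconstrained and the connectivity assumption holds), introduce one service $A$ for each $(m/2)$-subset $A\subseteq[m]$, and set $\mathcal{V}(A)=A$. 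The uniform point $y_v=2/m$ is feasible for \eqref{LP} with value $2$, whereas any integral node set $J$ must meet every $(m/2)$-subset, forcing $|J|\ge m/2+1$; hence $\mathrm{OPT}_{\mathrm{I}}/\mathrm{OPT}_{\mathrm{LP}}\ge(m/2+1)/2\to\infty$ as $m\to\infty$.

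For parts (ii) and (iii) I would fix an optimal solution $(\boldsymbol{x}^{*},\boldsymbol{y}^{*})$ of \eqref{LP}. In (ii), $\mu_v=\infty$ again removes \eqref{nodecapcons}; since $\sum_{v\in\mathcal{V}(k)}x^{*}_{v,1}(k)=1$ with $|\mathcal{V}(k)|\le u$, some $v\in\mathcal{V}(k)$ satisfies $x^{*}_{v,1}(k)\ge 1/u$, hence $y^{*}_v\ge 1/u$; rounding $\bar y_v=1$ precisely when $y^{*}_v\ge 1/u$ yields a feasible integral solution with $\sum_v\bar y_v\le u\sum_v y^{*}_v=u\,\mathrm{OPT}_{\mathrm{LP}}$, so the gap is at most $u$. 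In (iii), combining \eqref{noprocesscons}, the hypothesis $\lambda_1(k)\ge\frac1t\mu_v$ for $v\in\mathcal{V}(k)$, and \eqref{nodecapcons} gives $\frac{\mu_v}{t}\sum_{k}x^{*}_{v,1}(k)\le\sum_{k}\lambda_1(k)x^{*}_{v,1}(k)\le\mu_v y^{*}_v$, i.e.\ $\sum_k x^{*}_{v,1}(k)\le t\,y^{*}_v$; summing over $v\in\mathcal{V}$ and using \eqref{onlyonenode} gives $|\mathcal{K}|=\sum_k\sum_v x^{*}_{v,1}(k)\le t\,\mathrm{OPT}_{\mathrm{LP}}$, and with $\mathrm{OPT}_{\mathrm{I}}\le|\mathcal{K}|$ we conclude $\mathrm{OPT}_{\mathrm{I}}\le t\,\mathrm{OPT}_{\mathrm{LP}}$.

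For part (iv), the hypothesis that each node can host at most one function means that every node belongs to at most one of the sets $\mathcal{V}(k)$, i.e.\ the $\mathcal{V}(k)$ are pairwise disjoint; then \eqref{SC} decouples into $|\mathcal{K}|$ independent single-service subproblems, each of the form $\min\{\sum_{v\in\mathcal{V}(k)}y_v:\ \sum_{v}x_{v,1}(k)=1,\ 0\le x_{v,1}(k)\le y_v\le 1\}$ (the capacity constraint \eqref{nodecapcons} being implied by $x_{v,1}(k)\le y_v$ and $\lambda_1(k)\le\mu_v$), which has integral optimum $1$. Hence $\mathrm{OPT}_{\mathrm{LP}}=|\mathcal{K}|=\mathrm{OPT}_{\mathrm{I}}$, the gap is zero, and an optimal integral solution (pick any node of each $\mathcal{V}(k)$) is obtained in polynomial time, proving \eqref{newmip} is polynomially solvable here. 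I expect part (i) to be the main obstacle: (ii)--(iv) reduce to short averaging/rounding estimates once the passage to \eqref{SC}/\eqref{LP} is made, whereas (i) requires producing the right combinatorial family (or invoking the set-cover lower bound) and checking that it still respects the connectivity and $\mu_v\ge\lambda_1(k)$ assumptions of this subsection.
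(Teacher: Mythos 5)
Your parts (i)--(iii) are correct and track the paper closely in spirit. For (i) you make the same reduction to the set covering problem \eqref{SC1}/\eqref{LP1} that the paper does; the paper then just cites the $\Omega(\log|\mathcal{K}|)$ integrality gap of set cover, whereas you additionally supply the explicit $(m/2)$-subset family, which makes the argument self-contained (at the cost of an instance with exponentially many services, which is harmless for the claim). Part (iii) is essentially identical to the paper's summation of \eqref{nodecapcons}. Part (ii) is where you take a genuinely different route: the paper fixes optimal primal and dual solutions of \eqref{LP1}, invokes strong duality and complementary slackness to bound the size of the support $\{v: y_v^*>0\}$ by $u\sum_k w_k^*$, and rounds up the entire support; you instead use the classical frequency-based threshold rounding ($\bar y_v=1$ iff $y_v^*\ge 1/u$), whose feasibility follows from \eqref{onlyonenode}, \eqref{xyrelation}, and $|\mathcal{V}(k)|\le u$, and whose cost is at most $u\sum_v y_v^*$ by construction. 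Your argument is more elementary (no duality) and equally valid; the paper's dual-fitting version has the mild advantage of bounding the cost of rounding up \emph{every} fractionally activated node, which is what their \LPRR-approximation analysis later reuses.

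Part (iv) contains a genuine gap, rooted in how you read the hypothesis. You interpret ``each node can host (at most) one function'' as saying the sets $\mathcal{V}(k)$ are pairwise disjoint, under which the problem trivially decouples into $|\mathcal{K}|$ independent subproblems each with optimum $1$. The paper intends something weaker and more interesting: the condition is a \emph{constraint on solutions}, namely $\sum_{k\in\mathcal{K}}x_{v,1}(k)\le y_v$ for every $v$ (see problem \eqref{MC}), so distinct services may well compete for the same node but no node may serve two of them. Under that reading your decoupling argument does not apply whenever $\mathcal{V}(k_1)\cap\mathcal{V}(k_2)\neq\varnothing$, and the LP need not split by service; what saves the statement is that \eqref{MC} is an assignment-type LP whose constraint matrix is totally unimodular (equivalently, one can argue via bipartite matching and K\H{o}nig/Hall that the LP is feasible and attains value $|\mathcal{K}|$ exactly when an integral assignment of services to distinct nodes exists). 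You would need to add this total-unimodularity or matching argument to cover the intended case; as written, your proof of (iv) establishes only the special case of disjoint $\mathcal{V}(k)$, where the claim has no real content.
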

\begin{proof}
		We first prove case (i).
		In this case, constraint \eqref{nodecapcons} can be removed.
		Then problems \eqref{SC} and \eqref{LP} reduce to
		\revv{
		\begin{align}
			\min_{\substack{\boldsymbol{x},\,\boldsymbol{y}}} & ~~ \sum_{v \in \mathcal{V}}y_v \nonumber \\
			~{\text{s.t.}} &~~  \eqref{onlyonenode},~\eqref{xyrelation},~\eqref{noprocesscons},\nonumber\\
			&~~y_v\in\{0,1\},\,\forall~v\in\mathcal{{V}}, \nonumber                                                         \\
			&~~    x_{v,1}(k)\in\{0,1\},~\forall~v\in\mathcal{{V}}, ~k\in\mathcal{K},
			\label{SC2}
		\end{align}
		and
		\begin{align}
			\min_{\substack{\boldsymbol{x},\,\boldsymbol{y}}} & ~~ \sum_{v \in \mathcal{V}}y_v \nonumber \\
			~{\text{s.t.}} &~~   \eqref{onlyonenode},~\eqref{xyrelation},~\eqref{noprocesscons},\nonumber\\
			&~~y_v\in[0,1],\,\forall~v\in\mathcal{{V}}, \nonumber                                                         \\
			&~~    x_{v,1}(k)\in[0,1],~\forall~v\in\mathcal{{V}}, ~k\in\mathcal{K}.
			\label{SC2-LP}
		\end{align}
		Problems \eqref{SC2} and \eqref{SC2-LP} are further equivalent to}
			\begin{align}
			\min_{\substack{\boldsymbol{y}}} &  ~~ \sum_{v \in \mathcal{V}}y_v \nonumber \\
			~~{\text{s.t.~}} &~~    \sum_{v \in \mathcal{V}(k)} y_v \geq 1,~\forall~k \in \mathcal{K}, \nonumber                                                         \\
			&~~    y_v\in\{0,1\},\,\forall~v\in\mathcal{{V}}, 
			\label{SC1}
		\end{align}
		and
		\begin{align}
			\min_{\substack{\boldsymbol{y}}} & ~~  \sum_{v \in \mathcal{V}}y_v \nonumber \\
			~~{\text{s.t.~}} &~~    \sum_{v \in \mathcal{V}(k)} y_v \geq 1,~\forall~k \in \mathcal{K},  \nonumber                                                         \\
			& ~~    y_v\in[0,1],\,\forall~v\in\mathcal{{V}},
			\label{LP1}
		\end{align}
		\revv{respectively, in the sense that the solution of one problem can be constructed from that of the other. 
		Indeed, if $(\boldsymbol{x}^*, \boldsymbol{y}^*)$ is a solution of problem \eqref{SC2-LP}, then $\boldsymbol{y}^*$ must be a solution of problem \eqref{LP1}. 
		Conversely, letting $\boldsymbol{y}^*$ be a solution of problem \eqref{LP1}, we can also construct a feasible solution $(\boldsymbol{x}^*, \boldsymbol{y}^*)$ of problem \eqref{SC2-LP} as follows. 
		Let  $\sigma_1(k), \ldots, \sigma_{|\mathcal{V}(k)|}(k)$ be a permutation of $ \mathcal{V}(k)$ such that $y^*_{\sigma_1(k)} \geq \cdots \geq  y^*_{\sigma_{|\mathcal{V}(k)|}(k)}$, and $t(k)$ be the largest integer such that
		$ \sum_{g=1}^{t(k)} y_{\sigma_g(k)}^* \leq 1$.
		By $\sum_{v \in \mathcal{V}(k)}y^*_v \geq 1$, we must have $t(k) \in \{1, \ldots, |\mathcal{V}(k)|\}$.
		For each $k \in \mathcal{K}$, if $t(k) = |\mathcal{V}(k)|$,
		define
		\begin{equation}
			\label{LPsolutions}
			\begin{aligned}
				x^*_{v,1}(k)= & \left\{\begin{array}{ll}y^*_v, &
					~ {\text{if}}~v = \sigma_{1}(k), \ldots, \sigma_{t(k)}(k);\\
					0, & ~{\text{otherwise}},\end{array}\right.
			\end{aligned}
		\end{equation}
		otherwise, define 
		\begin{equation}
			\label{LPsolutions2}
			\begin{aligned}
				x^*_{v,1}(k)  = & \left\{\begin{array}{ll}y^*_v, &
					{\text{if}}~v = \sigma_{1}(k), \ldots, \sigma_{t(k)}(k);~ \\
					1-\sum_{g=1}^{t(k)} y_{\sigma_g(k)}^*,&\text{if}~v= \sigma_{t(k)+1}(k);\\
					0, &{\text{otherwise}},\end{array}\right.
			\end{aligned}
		\end{equation}
		By the construction, it is not difficult to verify that $(\boldsymbol{x}^*, \boldsymbol{y}^*)$ is a feasible solution of problem \eqref{SC2-LP}. 
		The proof of the equivalence of problems \eqref{SC2} and \eqref{SC1} is similar. }
		Problem \eqref{SC1} is the well-known set covering problem and problem \eqref{LP1} is its LP relaxation.
		From \cite[Pages 110-112]{Vazirani2001}, the integrality gap of the set covering problem is $\Omega(\log (|\mathcal{K}|))$, which can go to infinity as $|\mathcal{K}|\rightarrow \infty$.

		\revv{
			Next, we prove case (ii).
			The dual of problem \eqref{LP1} can be written as
			\begin{align}
				\max_{\substack{\boldsymbol{w}}} & ~~  \sum_{k \in \mathcal{K}}w_k \nonumber \\
				~~{\text{s.t.~}} &~~    \sum_{k\,:\,v \in \mathcal{V}(k)} w_k \leq 1,~\forall~v \in \mathcal{V},  \nonumber                                                         \\
				& ~~    w_k \geq 0,\,\forall~k\in\mathcal{{K}}.
				\label{LP2}
			\end{align}
			Let $\boldsymbol{y}^*$ and $\boldsymbol{w}^*$ be the optimal solutions to the primal LP \eqref{LP1} and the dual LP \eqref{LP2}, respectively. 
			By strong duality, we have $\sum_{v \in \mathcal{V}}y_v^* = \sum_{k \in \mathcal{K}}w_k^*$. 
			Let
			\begin{equation}
				\label{Sdef}
				\mathcal{S} = \{ v \in \mathcal{V} \mid y_v^*  > 0 \}.
			\end{equation} 
			Then the point $\bar{\boldsymbol{y}}$ defined by
			\begin{equation}
				\label{LPsolutionsy}
				\begin{aligned}
					\bar{y}_{v}= & \left\{\begin{array}{ll}1, &
						~ {\text{if}}~i \in \mathcal{S};\\
						0, & ~{\text{otherwise}},\end{array}\right.
				\end{aligned}~\forall~v \in \mathcal{V},
			\end{equation} 
			is feasible for problem \eqref{SC1}, showing that the optimal value of problem \eqref{SC1} is at most $|\mathcal{S}|$.
			On the other hand, observe that 
			\begin{equation}
				\label{tmpineq1}
				\begin{aligned}
					|\mathcal{S}| & = && \sum_{v \,: \, y_v^*> 0 } 1 \\
					&= && \sum_{v \, :\,  y_v^*> 0 }  \left(\sum_{k \, :\, v \in \mathcal{V}(k)} w^*_k\right)\\ 
					& \leq && \sum_{v\in \mathcal{V}}  \left(\sum_{k \, :\,  v \in \mathcal{V}(k)}w^*_k\right)\\
					& = && \sum_{k\in \mathcal{K}}  \sum_{v \in \mathcal{V}(k)}w^*_k\\
					& = && \sum_{k \in \mathcal{K}}|\mathcal{V}(k)| w_k^*\leq u \sum_{\mathcal{K}}\omega_k^*,
				\end{aligned}
			\end{equation}
			where the second equality follows from the complementary slackness conditions, i.e., if $y^*_v > 0$, $\sum_{k\, : \, v\in \mathcal{V}(k)}w_k^*=1$ must hold, and the last inequality follows from $|\mathcal{V}(k)| \leq u$ for all $k \in \mathcal{K}$.
			This shows that the integrality gap in  this case is at most  $\frac{|\mathcal{S}|}{\sum_{\mathcal{K}}\omega_k^*}\leq u	$.
		}

		\revv{Now, we prove case (iii).}
		In this case, summing up constraints \eqref{nodecapcons} and using \eqref{noprocesscons} and \eqref{onlyonenode}, we have 
		\begin{align*}
		\sum_{v\in \mathcal{V}}y_v & ~ \geq \sum_{v\in \mathcal{V}}\frac{1}{\mu_v}\sum_{k \in \mathcal{K}} \lambda_1(k) x_{v,1}(k) \\
		& ~= \sum_{k \in \mathcal{K}}\sum_{v\in \mathcal{V}(k)}\frac{1}{\mu_v} \lambda_1(k) x_{v,1}(k) \\
		& ~\geq \frac{1}{t}\sum_{k \in \mathcal{K}} \sum_{v \in \mathcal{V}(k)}x_{v,1}(k) = \frac{1}{t}|\mathcal{K}|,
		\end{align*}
		which implies that the objective of relaxation \eqref{LP1} is at least $\frac{1}{t}|\mathcal{K}|$.
		This, together with the fact that the objective value of problem \eqref{SC} is at most $|\mathcal{K}|$ (as each activated node must host at least one function), shows that the integrality gap in this case is at most $t$. 
		
			\revv{Finally, we prove case (iv).}
			In this case, problem \eqref{LP} reduces to 
			\begin{align}
			\min_{\substack{\boldsymbol{x},\,\boldsymbol{y}}} &~~   \sum_{v \in \mathcal{V}}y_v \nonumber \\
			~~{\text{s.t.~}} &~~   \eqref{onlyonenode},~\eqref{noprocesscons},~ \sum_{k \in \mathcal{K}}x_{v,1}(k) \leq y_v,~\forall~k \in \mathcal{K}, \nonumber                                                         \\
			&~~   x_{v,1}(k),~y_v\in[0,1],\,\forall~v\in\mathcal{{V}}, ~k\in\mathcal{K}.
			\label{MC}
			\end{align}
			One can check that the constraint matrix in problem \eqref{MC} is a totally unimodular matrix, showing that the integrality gap of this problem is zero and the original problem \eqref{newmip} can be solved to global optimality; see \cite[Chapter 4]{Conforti2014}.
\end{proof}


\section{An LP \revv{dynamic} rounding-and-refinement algorithm}
\label{sec:heur}

In this section, we will focus on designing an efficient algorithm, based on the proposed LP relaxation \eqref{lp}, to obtain a high-quality solution for problem \eqref{newmip}.
The basic idea of the proposed algorithm is to decompose the hard problem \eqref{newmip} into two relatively easy subproblems and solve two subproblems separately while taking their connection into account.
Specifically, in the first stage, we find a binary vector $(\bar{\boldsymbol{x}},\bar{\boldsymbol{y}})$ for the VNF placement subproblem (i.e., $(\bar{\boldsymbol{x}},\bar{\boldsymbol{y}})$ satisfying constraints \eqref{onlyonenode}-\eqref{nodecapcons}) using an LP \revv{dynamic} rounding procedure, which takes traffic routing into account.
In the second stage, based on the binary vector $(\bar{\boldsymbol{x}},\bar{\boldsymbol{y}})$, we use an LP refinement procedure to solve the traffic routing subproblem to obtain a solution that satisfies the E2E delay constraints \eqref{maxdelay1}--\eqref{delayconstraint} of all services.
\subsection{Solving the VNF Placement Subproblem}
\label{VNFalg}

In this subsection, we solve the VNF placement subproblem by constructing a binary vector $(\bar{\boldsymbol{x}}, \bar{\boldsymbol{y}})$ that satisfies constraints \eqref{onlyonenode}--\eqref{nodecapcons}.
Since vector $\bar{\boldsymbol{y}}$ can be uniquely determined by vector $\bar{\boldsymbol{x}}$, in the following we concentrate on constructing the binary vector $\bar{\boldsymbol{x}}$.
To do this, we first solve the LP relaxation   \eqref{lp}, denoted its solution by $(\boldsymbol{x}^*, \boldsymbol{y}^*, \boldsymbol{r}^*,\boldsymbol{\theta}^*)$.
If $\boldsymbol{x}^*$ is a binary vector, we obtain a feasible solution $\bar{\boldsymbol{x}}:=\boldsymbol{x}^*$ for the VNF placement subproblem.
Otherwise, we set $x_{v,s}(k)=1$ in relaxation \eqref{lp} if $x^*_{v,s}(k)=1$.
Then we choose one variable, denoted as  $x_{v_0,s_0}(k_0)$, whose value $x^*_{v_0,s_0}(k_0)$ is the largest among the remaining variables, i.e.,
\begin{align}
	x^*_{v_0,s_0}(k_0)=  \max \big \{  x^*_{v,s}(k) \, : \, 0 < x^*_{v,s}(k)< 1, ~ v \in \mathcal{V}, \nonumber \\
	\qquad k\in \mathcal{K},~s\in \mathcal{F}(k)\big\}. \label{maxvalue}
\end{align}
Next we decide to round variable $x_{v_0,s_0}(k_0)$ to one or zero.
In particular, we first set $x_{v_0,s_0}(k_0)=1$ in problem \eqref{lp}.
If the above modified LP is infeasible, we set $x_{v_0, s_0}(k_0) =0$ and continue to round other variables respect to the values $\{x^*_{v,s}(k)\}$.
\revv{If the above modified LP is feasible, we update the LP relaxation solution and} repeat the above procedure to the solution of the modified LP until a binary solution is obtained.
The details are summarized in the following Algorithm \ref{roundingx}.
\begin{algorithm}[!h]
	\caption{{An LP \revv{dynamic} rounding procedure for solving the VNF placement subproblem}}
	\label{roundingx}
	\begin{algorithmic}[1]
		{
		\renewcommand{\algorithmicrequire}{\textbf{Input:}}
		\renewcommand{\algorithmicensure}{\textbf{Output:}}
		\STATE Initialize the set $\mathcal{A}=\varnothing$;
		\STATE Solve problem \eqref{lp} to obtain its solution $(\boldsymbol{x}^*,\boldsymbol{y}^*,\boldsymbol{r}^*,\boldsymbol{\theta}^*)$;
		\WHILE {(there exists some $v \in \mathcal{V}$, $k\in \mathcal{K}$, and $s\in \mathcal{F}(k)$ such that $0 < x^*_{v,s}(k) < 1 $)}
		\STATE For each $v \in \mathcal{V}$, $k\in \mathcal{K}$, and $s\in \mathcal{F}(k)$ with $x^*_{v,s}(k)=1$, if constraint $x_{v,s}(k)=1$ is not in set $\mathcal{A}$, add it into set $\mathcal{A}$;
		\STATE Compute $(v_0, s_0,k_0)$ such that \eqref{maxvalue} holds and add constraint $x_{v_0,s_0}(k_0)=1$ into set $\mathcal{A}$;
		\STATE Add the constraints in set $\mathcal{A}$ into problem \eqref{lp} to obtain a modified LP;
		\IF {(the modifed LP problem is feasible)}
		\STATE Set $(\boldsymbol{x}^*,\boldsymbol{y}^*,\boldsymbol{z}^*,\boldsymbol{\theta}^*) \leftarrow $ the optimal solution of the modifed LP problem;
		\ELSE
		\STATE Replace constraint $x_{v_0,s_0}(k_0)=1$ by constraint $x_{v_0,s_0}(k_0)=0$ in set $\mathcal{A}$ and set  $x^*_{v_0,s_0}(k_0)\leftarrow 0 $;
		\ENDIF
		\ENDWHILE
		\STATE If vector $(\boldsymbol{x}^*,\boldsymbol{y}^*)$ satisfies constraints \revvv{\eqref{onlyonenode}--\eqref{nodecapcons}}, then the binary vector $(\bar{\boldsymbol{x}},\bar{\boldsymbol{y}})\leftarrow(\boldsymbol{x}^*,\boldsymbol{y}^*)$ is feasible for the VNF placement subproblem; otherwise declare that the algorithm fails to find a feasible solution.
		}
	\end{algorithmic}
\end{algorithm}

Two remarks on the above rounding strategy are as follows. 
First, the above rounding strategy makes sure that we can round one variable, taking a fractional value at the current solution, at a time and more importantly this variable can be rounded to a binary value that is consistent to other already rounded variables.
This is in sharp contrast to the algorithm in \cite{Chowdhury2012} where the variables are rounded \revv{in a one-shot fashion} without ensuring the consistency of the current rounding variable with other already rounded variables.
%
Second, our rounding strategy takes traffic routing into consideration as the modified LP contains the information of traffic routing of all services. 
Indeed, when setting $x_{v_0,s_0}(k_0)=1$ in problem \eqref{lp}, there may not exist traffic routing strategies for some services (due to the limited node/link capacities or the E2E latency thresholds).
In this case, the modified LP problem is likely to be infeasible, and we exploit this by re-setting $x_{v_0,s_0}(k_0)=0$ in modified LP problem \eqref{lp}; see steps 7--11 for more details.
\subsection{Solving the Traffic Routing Subproblem}
Once we get a binary vector $(\bar{\boldsymbol{x}}, \bar{\boldsymbol{y}})$, we still need to solve the traffic routing subproblem that finds paths connecting two adjacent functions and satisfying the E2E delay of all services.
This can be done by solving problem \eqref{newmip} with $\boldsymbol{x}=\bar{\boldsymbol{x}}$ and $\boldsymbol{y}=\bar{\boldsymbol{y}}$.
In this case, the objective function in problem \eqref{newmip} reduces to $$g(\boldsymbol{\theta})=\sum_{k \in\mathcal{K}} \theta_L(k).$$
Similarly, we solve the LP problem \eqref{lp} with $\boldsymbol{x}=\bar{\boldsymbol{x}}$ and $\boldsymbol{y}=\bar{\boldsymbol{y}}$ to obtain a solution $(\boldsymbol{r}^*, \boldsymbol{\theta}^*)$.
Due to the (possibly) fractional values of $\{ r^*_{ij}(k,s,1) \}$, $\theta^*(k,s)$ can be smaller than the communication delay incurred by the traffic flow from the node hosting function $f_s^k$ to the node hosting function $f_{s+1}^{k}$.
To recompute the communication delay based on solution $(\boldsymbol{r}^*, \boldsymbol{\theta}^*)$, we need to solve the NP-hard {Min-Max-Delay} problem \cite{Liu2017a}.
Fortunately, there exists an efficient polynomial time ($1+\epsilon$)-approximation algorithm for this problem \cite{Liu2017a}.
%
%
After recomputing the communication delays between all pairs of nodes hosting two adjacent functions, we can compute the total delay of each service $k$, denoted as $\bar{\theta}(k)$.
If $\bar{\theta}(k)> \Theta(k)$ for some service $k$, the current routing strategy is infeasible as it violates the E2E delay constraint of service $k$.
We then use an iterative LP refinement procedure to try to get a solution that satisfies the E2E delay constraints of all services.

The idea of our refinement procedure is to increase the weights of the variables $\theta_{L}(k) $ corresponding to the service whose E2E delay constraint is not satisfied at the current solution, in order to refine the solution.
In particular, we change the objective function $g(\boldsymbol{\theta})$ in problem \eqref{lp} into $$\hat{g}(\boldsymbol{\theta})=\sum_{k \in \mathcal{K}} \omega_k \theta_L(k),
$$
where $\omega_k \geq 1 $ for all $k \in \mathcal{K}$.
At each iteration, we solve problem \eqref{lp} (with the objective function $\hat{g}(\boldsymbol{\theta})$, $\boldsymbol{x}=\bar{\boldsymbol{x}}$, and $\boldsymbol{y}=\bar{\boldsymbol{y}}$) to obtain its solution $(\boldsymbol{r}^*, \boldsymbol{\theta}^*)$.
If, for some service $k$, the E2E delay constraint is violated at this solution, we increase $\omega_k$ by a factor of $\rho > 1$, and solve problem \eqref{lp} again.
The procedure is repeated until the solution satisfies the E2E delay constraints of all services or the iteration number reaches a predefined parameter IterMax.
We summarize the above procedure in Algorithm \ref{al:routing}.

\begin{algorithm}[t]
	\caption{{An iterative LP refinement procedure for solving the traffic routing subproblem}}
	\label{al:routing}
	\begin{algorithmic}[1]
		\renewcommand{\algorithmicrequire}{\textbf{Input:}}
		\renewcommand{\algorithmicensure}{\textbf{Output:}}
		{\STATE Set $\rho >1$, $\text{IterMax} \geq 1$, $t=0$, and $\omega_k = 1$ for $k \in \mathcal{K}$;
			\WHILE {$t < \text{IterMax}$}
			\STATE Solve problem \eqref{lp} (with the objective function $\hat{g}(\boldsymbol{\theta})$, $\boldsymbol{x}=\bar{\boldsymbol{x}}$, and $\boldsymbol{y}=\bar{\boldsymbol{y}}$) to obtain its solution $(\boldsymbol{r}^*, \boldsymbol{\theta}^*)$;
			\STATE For each $k\in \mathcal{K}$, compute the total delay $\bar{\theta}(k)$ based on $(\boldsymbol{r}^*, \boldsymbol{\theta}^*)$ and  $(\bar{\boldsymbol{x}}, \bar{\boldsymbol{y}})$;
			\IF{($\bar{\theta}(k) \leq \Theta_k$ for all $k \in \mathcal{K}$)}
			\STATE Stop with the feasible solution $(\boldsymbol{r}^*, \boldsymbol{\theta}^*)$;
			\ELSE
			\STATE For each $k\in \mathcal{K}$ with $\bar{\theta}(k) > \Theta_k$,  set $\omega_k \leftarrow \rho \omega_k$;
			\ENDIF
			\STATE Set $t\leftarrow t+ 1$;
			\ENDWHILE
		}
	\end{algorithmic}
\end{algorithm}%
\subsection{\revv{Analysis Results}}
\label{problemcomplexity}

\revv{In this subsection, we present some theoretical analysis results of the proposed \LPRR algorithm, including its complexity and approximation ratio (in some special cases). }

\subsubsection{\revv{Complexity Analysis}}
The dominant computational cost of our algorithm is to solve the LP problems in form of \eqref{lp}. 
Indeed, the number of solving problems \eqref{lp} in Algorithms \ref{roundingx} and \ref{al:routing} is upper bounded by $|\mathcal{V}|\sum_{k \in \mathcal{K}}\ell_k$ and  $\text{IterMax}$, respectively.
By summing up the upper bound of solved LPs in Algorithms \ref{roundingx} and \ref{al:routing}, we obtain the upper bound of solved LPs in the proposed \LPRR algorithm, which are 
\begin{equation}\label{upperLP}
	|\mathcal{V}|\sum_{k \in \mathcal{K}}\ell_k + \text{IterMax}.
\end{equation}
%
The worst-case complexity of solving an LP is approximately $\mathcal{O}((n+m)^{1.5}n^2)$ using the interior-point method in \cite[Section 6.6.1]{Ben-Tal2001}, where $n$ and $m$ are the numbers of variables and constraints, respectively.
This, together with the fact that the numbers of variables and constraints in problem \eqref{lp} are $\mathcal{O}(|\mathcal{L}|\sum_{k\in \mathcal{K}}\ell_k)$ and $\mathcal{O}(\min\{|\mathcal{I}|\sum_{k\in \mathcal{K}}\ell_k, |\mathcal{L}|\})$, respectively, implies that problem \eqref{lp} can be solved in 
$$
\small
\mathcal{O}\left (\left (|\mathcal{V}|\sum_{k \in \mathcal{K}}\ell_k+\text{IterMax}\right)*\left(|\mathcal{L}|\sum_{k\in \mathcal{K}}\ell_k\right)^{3.5}\right).
$$
In sharp contrast, the worst-case complexity of using the standard MILP solvers like Gurobi \cite{Gurobi} to solve problem \eqref{newmip} is exponential, which makes it unsuitable for solving large-scale problems.

\subsubsection{\revv{Approximation Ratio Analysis}}

\revv{Next, we present some approximation ratio analysis results of the proposed \LPRR algorithm for the special cases considered in Section \ref{integralitygapanalysis} (i.e., cases (i)--(iv) in Theorem \ref{integralitygap}) in the following theorem.
}

\begin{theorem}
	\revv{Consider the cases in Theorem \ref{integralitygap}. Then, for  cases (i) and (iii), any polynomial time algorithm including the proposed \LPRR algorithm cannot return a solution with a constant approximation ratio guarantee for problem \eqref{newmip};
	for cases (ii) and (iv), the \LPRR algorithm can  return a $u$-approximation solution and a global solution for problem \eqref{newmip}, respectively.}
\end{theorem}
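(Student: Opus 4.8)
The plan is to treat the four cases of Theorem~\ref{integralitygap} separately, in each case reducing to structural facts already established in its proof.

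For cases (i) and (iii) I would prove inapproximability. Recall that in case (i) problem \eqref{newmip} is equivalent to the minimum set cover problem \eqref{SC1} and relaxation \eqref{lp} to its LP relaxation \eqref{LP1}. Combining the classical result that minimum set cover admits no polynomial-time approximation within any constant factor unless $\mathsf{P}=\mathsf{NP}$ with the polynomial running time of \LPRR from Section~\ref{problemcomplexity} gives case (i): \LPRR, and indeed any polynomial-time algorithm, cannot guarantee a constant ratio for \eqref{newmip}. For case (iii), I would observe that the class of instances admitting a parameter $t$ with $\lambda_1(k)\ge\mu_v/t$ contains encodings of arbitrary set cover instances --- take $\lambda_1(k)=1$ and $\mu_v$ large enough that the node-capacity constraints become vacuous, so that the instance coincides with \eqref{SC1} with $t=\lceil\max_v\mu_v\rceil$ --- and hence the same $\mathsf{P}\ne\mathsf{NP}$ inapproximability applies; equivalently, the bound ``integrality gap $\le t$'' of Theorem~\ref{integralitygap} is essentially tight, so an algorithm whose quality is certified through \eqref{lp}, such as \LPRR, cannot guarantee a ratio better than $t$, which is unbounded over the family.

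For case (iv) I would show that \LPRR returns a global optimum. By Theorem~\ref{integralitygap}(iv), relaxation \eqref{lp} reduces to problem \eqref{MC}, whose constraint matrix is totally unimodular; hence \eqref{lp} has an integral optimal solution of value equal to the optimum of \eqref{newmip}. It then suffices to verify that Algorithm~\ref{roundingx} never increases this value: \eqref{MC} is a transportation-type polytope, so every fractional optimal solution is a convex combination of integral optimal ones, and therefore whenever the algorithm fixes a fractional variable $x_{v_0,1}(k_0)>0$ to $1$, the restricted LP still has an integral optimal solution of the same value with $x_{v_0,1}(k_0)=1$. Thus the rounding terminates with an integral optimal solution of \eqref{MC}, i.e.\ of \eqref{newmip}; since the second-stage routing is trivially feasible when link capacities are infinite and link delays vanish, \LPRR outputs a global optimum.

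For case (ii) I would show that \LPRR is a $u$-approximation, and this is where the main obstacle lies. Since $\mu_v=\infty$ (and link capacities are infinite with zero delay), every LP arising in Algorithm~\ref{roundingx}, before or after fixing $\boldsymbol{x}$-variables to $1$, is feasible; hence the backtracking in steps~7--11 never resets a variable, and \LPRR terminates with a binary $(\bar{\boldsymbol{x}},\bar{\boldsymbol{y}})$ feasible for \eqref{SC1}, with a feasible routing stage. Let $(\boldsymbol{x}^*,\boldsymbol{y}^*)$ solve the first LP \eqref{lp} (equivalently \eqref{LP1}) and set $\mathcal{S}=\{v:y_v^*>0\}$; the dual-based counting in the proof of Theorem~\ref{integralitygap}(ii) gives $|\mathcal{S}|\le u\cdot\mathrm{LP}^*\le u\cdot\mathrm{OPT}$. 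The crux is the invariant that \LPRR activates only nodes in $\mathcal{S}$: each variable $x_{v_0,1}(k_0)$ fixed to $1$ has positive value in the current LP solution, so its node lies in the support of that solution, and --- using $\mu_v=\infty$ --- forcing services onto already-activated nodes never makes it optimal for a re-solved LP to place weight on a node outside $\mathcal{S}$, so all supports stay within $\mathcal{S}$. Granting the invariant, $\sum_v\bar{y}_v=|\{v:\bar{y}_v=1\}|\le|\mathcal{S}|\le u\cdot\mathrm{OPT}$. I expect this invariant --- ruling out that the re-solved LPs spread weight onto new nodes --- to be the hardest part, possibly requiring a mild assumption on how the LP solver selects among alternative optima (e.g.\ returning a basic optimal solution); cases (i) and (iv) should be essentially routine.
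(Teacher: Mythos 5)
Your cases (i), (ii), and (iv) follow essentially the same route as the paper: (i) is the set-cover inapproximability argument verbatim; (ii) uses exactly the paper's dual-based counting $|\mathcal{S}|\leq u\sum_{k}w_k^*\leq u\cdot\mathrm{OPT}$ from the proof of Theorem~\ref{integralitygap}(ii), and your worry about the invariant is the step the paper dispatches by asserting that on problem \eqref{SC2} the algorithm is ``equivalent to rounding $y_v^*$ to one in descending order of $y_v^*$''; (iv) is the total-unimodularity argument, to which you add a (reasonable, not strictly needed if the solver returns a vertex) remark about fractional optima being convex combinations of integral ones.

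The genuine gap is case (iii). Your reduction embeds set cover by choosing $\lambda_1(k)=1$ and $\mu_v$ so large that the capacity constraints are vacuous, which forces $t\geq\mu_v\geq|\mathcal{K}|$ to grow with the instance; such instances violate the hypothesis of case (iii), which requires $t$ to be a \emph{constant}. For any fixed constant $t$, the optimization version is in fact trivially $t$-approximable whenever a feasible solution can be found (each activated node hosts at most $t$ functions, so $\mathrm{OPT}\geq|\mathcal{K}|/t$ while any feasible solution activates at most $|\mathcal{K}|$ nodes), so ``the gap bound $t$ is unbounded over the family'' cannot be the source of hardness, and your argument also only constrains algorithms ``certified through \eqref{lp}'' rather than arbitrary polynomial-time algorithms. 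The paper's actual obstruction is different in kind: with finite, binding capacities and $t=3$, even \emph{deciding feasibility} of problem \eqref{SC} is strongly NP-complete (a partition-type hardness proved in the companion report \cite{Chen2021d}), and it is this that rules out any polynomial-time constant-ratio algorithm for case (iii). You would need that feasibility-hardness result, or something equivalent, to close this case.
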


\begin{proof}
	\revv{We prove the results in the four cases (i)--(iv), separately.}
	
\revv{
	(i) $\mu_v = \infty$ for all $v \in \mathcal{V}$.
	As shown in the proof of Theorem \ref{integralitygap}, problem \eqref{newmip} with $\mu_v = \infty$ for all $v \in \mathcal{V}$ reduces to a generic set covering problem \eqref{SC1}, which does not admit a polynomial time constant approximation ratio algorithm \cite{Dinur2014}. 
	As a result, the proposed \LPRR algorithm cannot return a solution with a constant approximation ratio guarantee.
}

\revv{(ii) $\mu_v = \infty$ for all $v \in \mathcal{V}$ and $|\mathcal{V}(k)| \leq u$ for all $k \in \mathcal{K}$.
	As shown in Section \ref{integralitygapanalysis}, (i) problem \eqref{newmip} in this case reduces to problem \eqref{SC2}; and (ii) if $\bm{y}^*$ is an optimal solution of problem \eqref{LP1}, 
	then $(\bm{x}^*,\bm{y}^*)$ is an optimal solution of problem \eqref{SC2-LP} where $\bm{x}^*$ is defined in  \eqref{LPsolutions} or \eqref{LPsolutions2}.
	It is simple to verify that applying the \LPRR algorithm to solve problem \eqref{SC2} is equivalent to rounding $y_v^*$ to one in the descending order of ${y_v^*}$ until a feasible solution of problem \eqref{SC2} is obtained.
	In this way, we will obtain a feasible solution $\bar{\bm{y}}$ given in \eqref{LPsolutionsy} of problem \eqref{SC1} with its optimal value at most $|\mathcal{S}|$ where $\mathcal{S} $ is defined \eqref{Sdef}.
	By \eqref{tmpineq1} and the fact that the optimal value of problem \eqref{SC1} is larger than or equal to that of its relaxation \eqref{LP1}, the optimal value of problem \eqref{SC1} must be at least $\frac{|\mathcal{S}|}{u}$.
	Consequently, the \LPRR algorithm can return a $u$-approximation solution for problem \eqref{newmip} in this case.
}

\revv{(iii) $\lambda_1(k) \geq \frac{1}{t}\mu_v$ for all $k \in \mathcal{K}$ and $v \in \mathcal{V}(k)$.
	In this case, checking the feasibility of problem \eqref{SC} with $\lambda_1(k) \geq \frac{1}{3}\mu_v$ for all $k \in \mathcal{K}$ and $v \in \mathcal{V}(k)$ is strongly NP-complete; see Section II of \cite{Chen2021d} for a rigorous proof.  
	Consequently, there does not exist a polynomial time constant approximation ratio algorithm to solve problem \eqref{SC} with $t=3$, and hence the proposed \LPRR algorithm cannot return a solution with a constant approximation ratio guarantee.}

\revv{(iv) Each node can host at most one function. 
	As shown in Section \ref{integralitygapanalysis}, the LP relaxation of problem \eqref{newmip} reduces to \eqref{MC} and solving \eqref{MC} will return an integral solution $(\bar{\bm{x}}, \bar{\bm{y}})$.
	Therefore, the proposed \LPRR algorithm can return a global solution of problem \eqref{newmip} in this case.
}
\end{proof}

\subsection{\revv{Comparison with Existing Algorithms}}
\label{comexisting}

\revv{
	In this subsection, we carefully compare the proposed LPdRR algorithm with existing algorithms in the literature \cite{Chowdhury2012}, \cite{Yu2008}, and \cite{Zhang2012}.
	\\[2pt]
	{\bf $\bullet$ Comparison with non-LPR types of algorithms in \cite{Yu2008} and \cite{Zhang2012}\\[2pt]}
	{\indent We first compare the proposed \LPRR algorithm with the inflation optimization (\GREEDY) and \PSO algorithms, which are adapted from those in \cite{Yu2008} and \cite{Zhang2012}, respectively.}
	{\GREEDY maps VNFs into cloud nodes using an inflation optimization procedure while \PSO maps VNFs into cloud nodes using a particle swarm optimization procedure.}
	{These two algorithms then find paths connecting two adjacent VNFs in the network by solving a multicommodity flow problem.}
	{Due to the space limit, we provide the detailed description of the \GREEDY and \PSO algorithms in Section III of \cite{Chen2021d}.} 
	
	Compared to the \GREEDY and \PSO algorithms that address the VNF and link mappings separately without a coordination, the proposed \LPRR algorithm can coordinate VNF and link mappings.
	This is because that the LP relaxation of the MILP formulation, used for finding a solution for the VNF placement, also considers the traffic routing of all services. 
	Consequently, the proposed \LPRR algorithm is likely to return a better solution than \GREEDY and \PSO; see \cite{Chowdhury2012,Fischer2013} for a similar discussion on the advantage of LPR types of algorithms.}\\[2pt]
\revv{{\bf $\bullet$ Comparison with the \LPR algorithm in \cite{Chowdhury2012}\\[2pt]}}
\revv{\indent Next, we compare the proposed algorithm with the LP one-shot rounding (\LPR) algorithm, which is adapted from the one in \cite{Chowdhury2012}. 
	\LPR first solves the compact LP relaxation of the network slicing problem to obtain its solution and then uses a rounding strategy to map VNFs into cloud nodes based on this solution.
	Similar to that in \GREEDY and \PSO, \LPR then finds paths connecting two adjacent VNFs in the network by solving a multicommodity flow problem.
	A detailed description of the \LPR can be found in Section III of \cite{Chen2021d}.
	
	The key difference between \LPR and \LPRR lies in their rounding strategy.
	In particular, compared with \LPR, a much more stronger coordination between VNF and link mappings is explored in the proposed \LPRR algorithm; see  the last paragraph of Section \ref{VNFalg} for a detailed discussion.
	%
	This advantage enables that the proposed \LPRR algorithm can generally find better solutions than the \LPR algorithm in \cite{Chowdhury2012}.
	We remark that a modified version of the proposed \LPRR can always outperform \LPR.
	More specifically, if we remove step 8 of Algorithm \ref{roundingx} (i.e., the step that dynamically updates
	the relaxation solution), the resultant algorithm will do rounding based on the static LP solution and is thus called LP static rounding-and-refinement algorithm (\LPRRSS).
	A detailed description of the \LPRRSS can be found in Section IV of \cite{Chen2021d}.
	The resultant \LPRRSS can be guaranteed to find feasible solutions as long as \LPR can find feasible solutions for the network slicing problem while the converse is not true.
	However, as will be shown in Section \ref{compofalgs}, dynamically updating the relaxation solution in step 8 of Algorithm \ref{roundingx} is helpful in (i) finding a feasible solution with a smaller number of activated cloud nodes, and (ii) improving the solution efficiency of the algorithm.
}

\section{Simulation Results}
\label{sec_num}

In this section, we present simulation results to illustrate the effectiveness and efficiency of the proposed LP relaxation \eqref{lp} and the proposed \LPRR algorithm for solving the network slicing problem.
Specifically, we first perform numerical experiments to compare the performance of solving the natural LP relaxation \eqref{milplp} and the proposed LP relaxation \eqref{lp}.
Then, we present some simulation results to demonstrate the efficiency and effectiveness of our proposed \LPRR algorithm over the state-of-the-art approaches \revv{in \cite{Chen2020},  \cite{Chowdhury2012}, \cite{Yu2008}, and \cite{Zhang2012}}.

In problem \eqref{newmip}, we choose $\sigma  = 0.001$ and $ P=2 $, as suggested in \cite{Chen2020}.
In Algorithm \ref{al:routing}, we set $\rho = 5$ and $\text{IterMax}=10$.
\revv{We use Gurobi 9.5 \cite{Gurobi} to solve all MILP 
and LP problems. 
It is worthwhile noting that Gurobi is one of the fastest solvers for solving MILP and LP problems \cite{Mittelmann2021}.}
When solving the MILP problems, the time limit is set to be 1800 seconds, and the relative gap tolerance is set to be 0.1\%, i.e., a feasible solution which has an optimality gap of 0.1\% is considered to be optimal.
All experiments were performed on a server with 2 Intel Xeon E5-2630 processors and 98 GB of RAM, using the Ubuntu GNU/Linux Server \revv{20.04} x86 64 operating system.

\rev{We test all algorithms on the fish network topology studied in \cite{Zhang2017} and the JANOS-US, NOBEL-GERMANY, and POLSKA network topologies taken from the SNDLIB \cite{Orlowski2010}}. 
\rev{Due to the space limit, we only present the simulation results on the fish network topology in this paper and more simulation results on other network topologies can be found in Section \revv{V} of \cite{Chen2021d}.}
\rev{The fish network topology contains 112 nodes and 440 links, including 6 cloud nodes.}
%
%
The cloud nodes' and links' capacities are randomly generated within $ [50,100] $ and $ [5,55] $, respectively.
The NFV	 and communication delays on the cloud nodes and links are randomly generated within $\{3,4,5,6\}$ and $\{1,2\}$, respectively.
For each service $k$, node $S(k)$ is randomly chosen from the available nodes and node $D(k)$ is set to be the common destination node; SFC $ \mathcal{F}(k) $ is a sequence of functions randomly generated from $ \{f^1, \ldots, f^4\} $ with $ |\mathcal{F}(k)|=3 $; $ \lambda_s(k) $'s are the service function rates which are all set to be the same integer value, randomly generated within $ [1,11] $; $ \Theta_k $ is set to $ 20+(3*\text{dist}_k+\alpha) $ where $ \text{dist}_k $ is the delay of the shortest path between nodes $ S(k) $ and $ D(k) $ and $ \alpha $ is randomly chosen in $[0,5]$.
The above parameters are carefully chosen to make sure that the constraints in problem \eqref{newmip} are neither too tight nor too loose.
For each fixed number of services, 100 problem instances are randomly generated and the results presented below are obtained by averaging over these problem instances.

\subsection{Comparison of LP Relaxation \eqref{milplp} and Proposed LP Relaxation \eqref{lp}}
\label{compofLPs}
In this subsection, we compare the performance of solving the natural LP relaxation \eqref{milplp} and the proposed LP relaxation \eqref{lp}.
%


\begin{figure}[t]
	\centering
	\includegraphics[height=\figuresize]{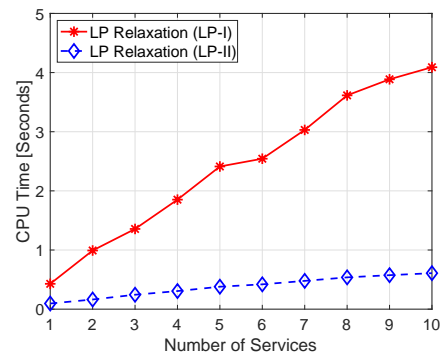}
	\caption{\rev{The CPU time taken by solving relaxations (LP-I) and (LP-II).}}
	\label{LPtime}
\end{figure}
We first compare the computational efficiency of solving two relaxations \eqref{milplp} and \eqref{lp}.
\rev {Fig. \ref{LPtime} plots the {CPU time} versus the numbers of services.
	From Fig. \ref{LPtime}, it can be clearly seen that it is much more efficient to solve relaxation \eqref{lp} than relaxation \eqref{milplp}.
	In particular, when the number of services is equal to 10, the CPU time of solving relaxation \eqref{lp} is about \revv{0.6} seconds while that of solving relaxation \eqref{milplp} is more than \revv{4.1} seconds.
	%
	In addition, we can observe from Fig. \ref{LPtime} that,  with the increasing number of services, the {CPU time} of \eqref{milplp}  increases much faster, as compared with that of relaxation \eqref{lp}.
	This is mainly due to the fact that the numbers of variables and constraints in relaxation \eqref{milplp} increase much faster than those in relaxation \eqref{lp} as the number of services increases.}

%

Next,  we compare the optimal values of the two LP relaxations \eqref{milplp} and \eqref{lp}.
As it has been shown in Corollary \ref{strongerLP}, the LP relaxation \eqref{lp} is stronger than the natural LP relaxation \eqref{milplp}, and in  Example \ref{example1}, relaxation \eqref{lp} returns a solution with a much larger communication delay, as compared to that returned by relaxation \eqref{milplp}.
Therefore, we compare the total communication delay of the solutions returned by the two relaxations.
To do this, we need to eliminate the effect of the number of activated nodes and the total NFV delay in the objective function.
This can be done by fixing variables $\boldsymbol{x}$ and $\boldsymbol{y}$, in relaxations \eqref{milplp} and \eqref{lp},  to the solution returned by solving problem \eqref{newmip}.
We compare the relative gap, which is defined as follows:
$$
	\text{Gap} = \frac{\text{D(MILP)} - \text{D(LP-II)}}{\text{D(MILP)} - \text{D(LP-I)}},
$$
where D(MILP), D(LP-I), and  D(LP-II) are the total communication delay returned by solving problems \eqref{newmip}, \eqref{milplp}, and \eqref{lp}, respectively.
The relative gap reflects the tighterness of relaxation \eqref{lp} over that of relaxation \eqref{milplp}, i.e., the smaller the relative gap, the stronger relaxation \eqref{lp} (as compared with relaxation \eqref{milplp}).
\rev{We remark that the {relative gap} is a widely used performance measure in the integer programming community \cite{Vielma2010,Fukasawa2011} to show the tightness of an LP relaxation over another LP relaxation.}

\begin{figure}[t]
	\centering
	\includegraphics[height=\figuresize]{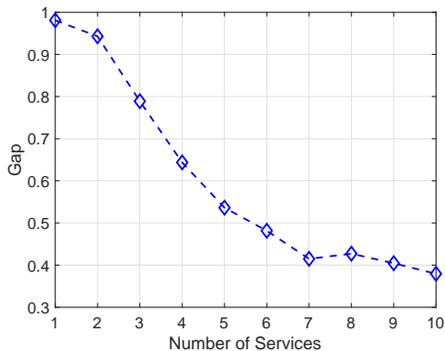}
	\caption{{Average relative gap of the total communication delays.}}
	\label{relgap}
\end{figure}

Fig. \ref{relgap} plots the average relative gap versus different numbers of services.
As observed in Fig. \ref{relgap}, in all cases, the relative gap is smaller than 1.0, showing that the LP relaxation \eqref{lp} is indeed stronger than the LP relaxation \eqref{milplp}.
In addition, with the increasing number of services, the relative gap becomes smaller.
This can be explained as follows.
As the number of services increases, the traffic in the network becomes heavier.
This further results in the situation that the traffic flows between the two nodes hosting two adjacent functions are likely to transmit over multiple paths.
Consequently, the total communication delay returned by solving LP relaxation \eqref{milplp} becomes smaller and relaxation \eqref{milplp} becomes looser.


Based on the above computational results, we can conclude that the proposed relaxation \eqref{lp} significantly outperforms the natural relaxation \eqref{milplp} in terms of the solution efficiency and providing a better LP bound.
As it will be seen in the next subsection, the solution efficiency and the tightness of the proposed LP relaxation \eqref{lp} plays a crucial role in the effectiveness and efficiency of the proposed \LPRR algorithm.

\subsection{Comparison of Proposed Algorithm and Those in \cite{Chen2020}, \cite{Chowdhury2012}, \cite{Yu2008}, and \revv{\cite{Zhang2012}}}
\label{compofalgs}
\revv{In this subsection, we compare the performance of the proposed \LPRR algorithm with the exact approach using standard MILP solvers (called \EXACT) in \cite{Chen2020}, the \LPR algorithm in \cite{Chowdhury2012},  the \GREEDY algorithm in \cite{Yu2008}, and the \PSO algorithm in \cite{Zhang2012}.}
To address the advantage of using the proposed LP relaxation \eqref{lp} in the \LPRR algorithm, we compare the \LPRR algorithm with \LPRRS, which embeds the natural LP relaxation \eqref{milplp} into Algorithm \ref{roundingx} while still embeds the proposed LP relaxation \eqref{lp} into Algorithm \ref{al:routing}.
It is worthwhile remarking that the natural LP relaxation \eqref{milplp} cannot be used in Algorithm \ref{al:routing} as we generally cannot find a traffic routing strategy based on the  solution of the natural LP relaxation \eqref{milplp}; see the example in Appendix \ref{detailsofex1}.
\revv{In addition, to evaluate the effect of the LP dynamic rounding strategy for solving the VNF placement subproblem in Algorithm \ref{roundingx}, we also compare the proposed \LPRR algorithm with \LPRRSS in which an LP static rounding strategy is used; see the last paragraph of Section \ref{comexisting} for a detailed discussion of \LPRRSS.} 
\begin{figure}[t]
	\centering
	\includegraphics[height=\figuresize]{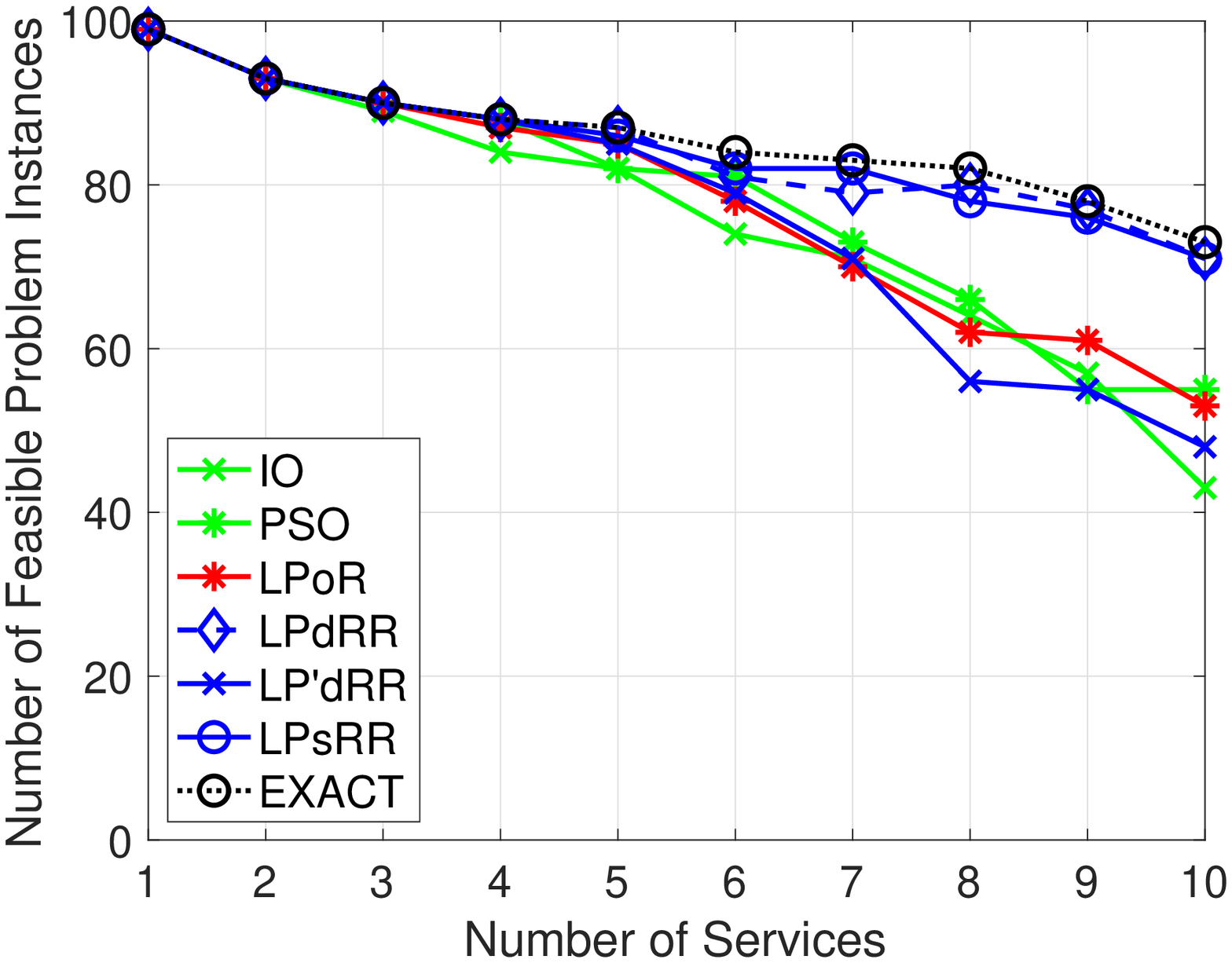}
	\caption{\rev{The number of feasible problem instances solved by \revv{\GREEDY, \PSO, \LPR, \LPRR, \LPRRS, \LPRRSS, and \EXACT}.}}
	\label{nfeas}
\end{figure}
\begin{figure}[t]
	\centering
	\includegraphics[height=\figuresize]{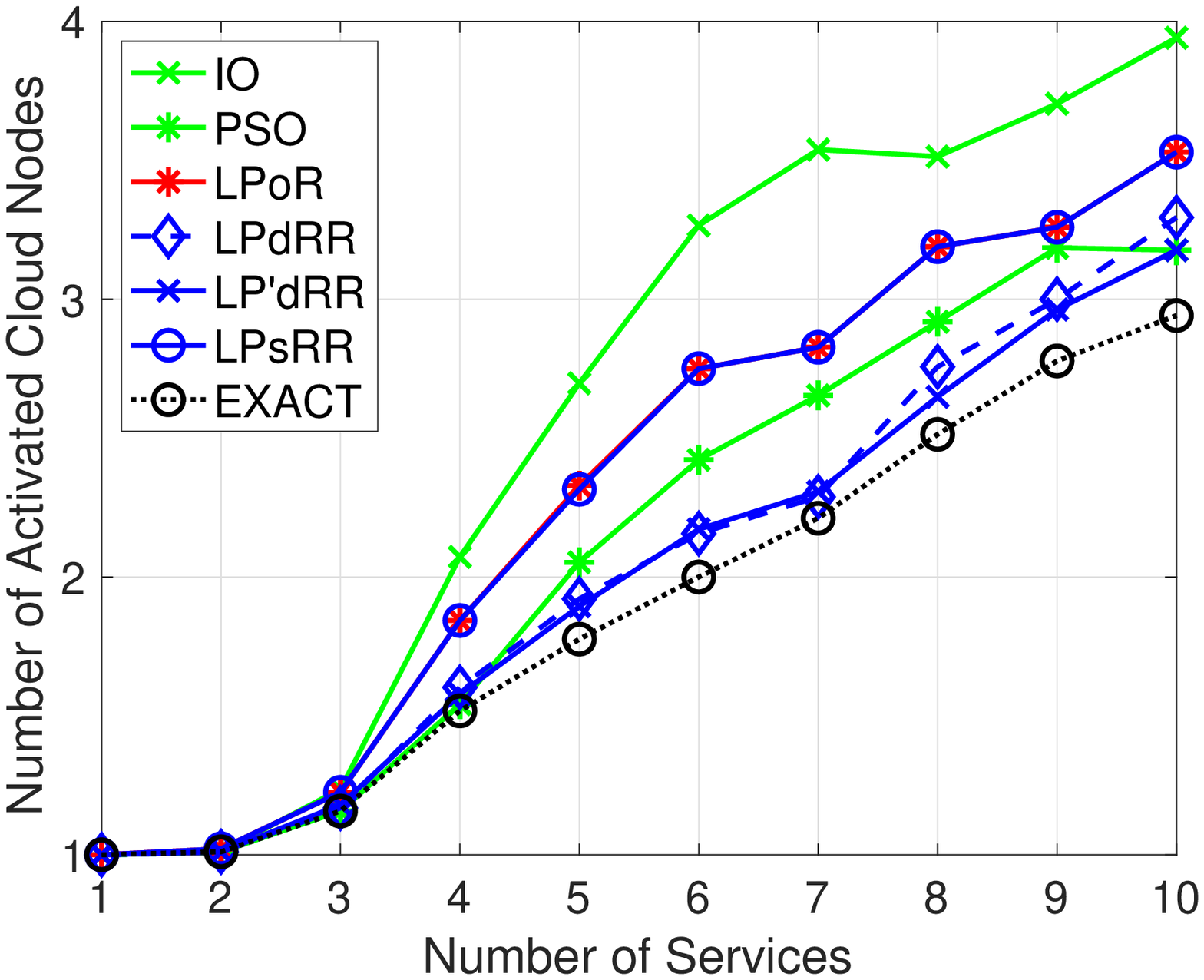}
	\caption{\rev{Average number of activated cloud nodes of  \revv{\GREEDY, \PSO, \LPR, \LPRR, \LPRRS, \LPRRSS, and \EXACT}.}}
	\label{sumy}
\end{figure}
\begin{figure}[t]
	\centering
	\includegraphics[height=\figuresize]{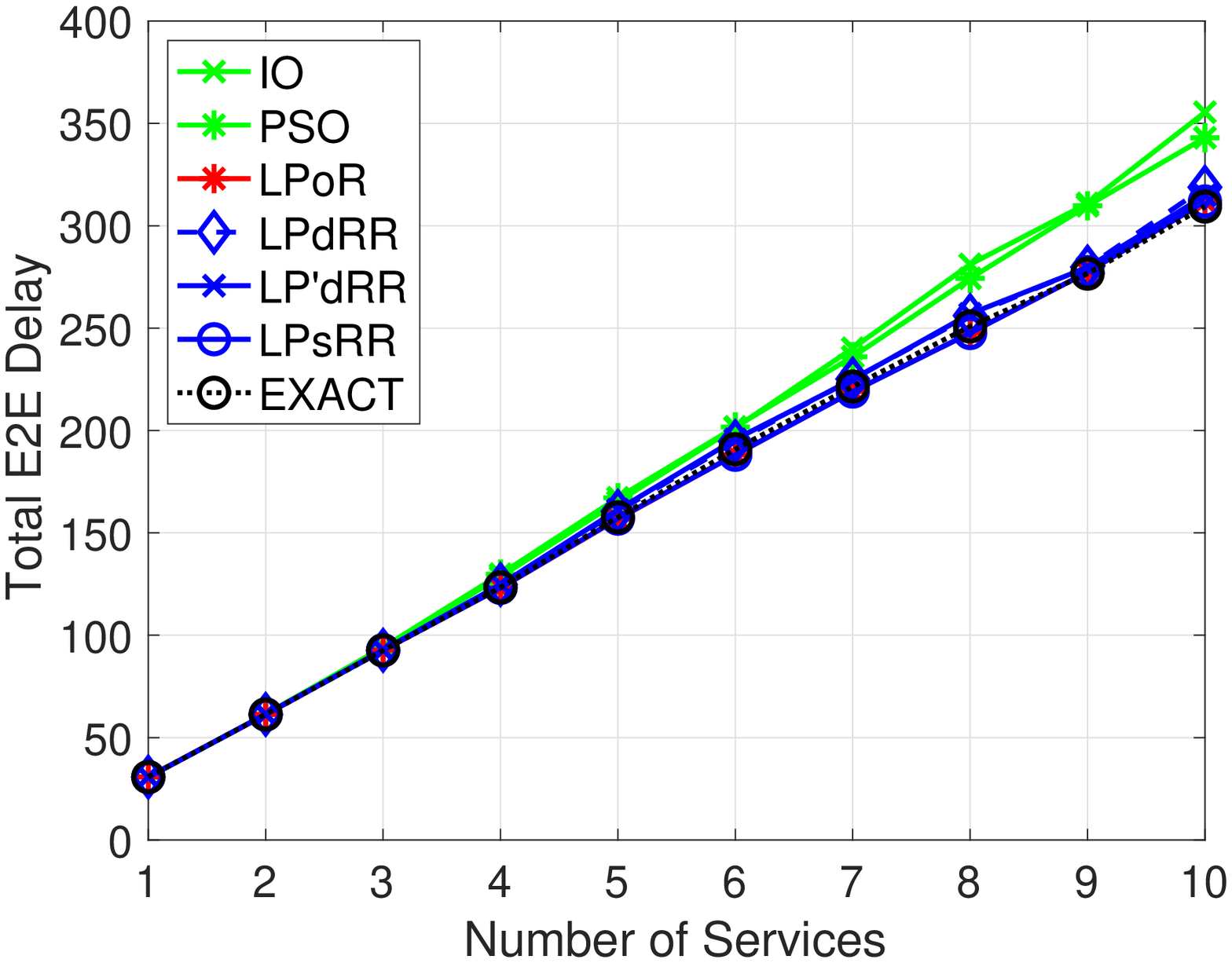}
	\caption{\revv{Total E2E delay solved by \GREEDY, \PSO, \LPR, \LPRR, \LPRRS, \LPRRSS, and \EXACT.}}
	\label{delay}
\end{figure}
\subsubsection{Effectiveness of $\LPRR$}
\revv{
	Figs. \ref{nfeas}, \ref{sumy}, and \ref{delay} plot the number of feasible problem instances, the average number of activated nodes, and the total E2E delay solved by \GREEDY, \PSO, \LPR, \LPRR, \LPRRS, \LPRRSS, and \EXACT, respectively.
}
First, we can observe from Fig. \ref{nfeas} that compared with $\LPRRS$, $\LPRR$ can solve a larger number of problem instances than $\LPRRS$, especially when the number of services is large.
This clearly shows that the advantage of using the proposed LP relaxation \eqref{lp} in the proposed \LPRR algorithm, i.e., it enables the \LPRR algorithm to find feasible solutions for much more problem instances.
\revv{Second, \LPRR and \LPRRSS are comparable in terms of finding feasible solutions but \LPRR can return a solution with a smaller number of activated cloud nodes than \LPRRSS, as illustrated in Figs. \ref{nfeas} and \ref{sumy}.
This highlights the advantage of the dynamic strategy of rounding the LP solution in the proposed \LPRR algorithm.}
\revv{Finally, we can see the effectiveness of the proposed algorithm \LPRR~over \GREEDY, \PSO, and \LPR in Figs. \ref{nfeas}-\ref{delay}.}
In particular, as shown in Fig.~\ref{nfeas}, using the proposed algorithm \LPRR, we can find feasible solutions for much more problem instances, compared with using \revv{\GREEDY, \PSO, and \LPR}.
Indeed, \LPRR~finds feasible solutions for almost all (truly) feasible problem instances (as EXACT is able to find feasible solutions for all (truly) feasible problem instances and the difference of the number of feasible problem instances solved by EXACT and \LPRR~is small in Fig.~\ref{nfeas}).
\revv{From Fig. \ref{sumy}, we can observe that using \LPRR, the number of activated cloud nodes is {generally} smaller  than that of using \GREEDY, \PSO, and \LPR, respectively.
From Fig. \ref{delay}, we can observe  that (i) the E2E delays returned by \GREEDY and \PSO are fairly larger than those returned by \LPR, \LPRR, \LPRRS, \LPRRSS, and \EXACT; and
	(ii) the E2E delays returned by \LPR, \LPRR, \LPRRS, \LPRRSS, and \EXACT are comparable.
	The reason for this is that \LPR, \LPRR, \LPRRS, \LPRRSS, and \EXACT explicitly consider the E2E delay when solving the VNF placement subproblem while \GREEDY and \PSO do not explicitly consider the E2E delay when solving the VNF placement subproblem (and hence return a solution with a higher E2E delay).}
\begin{figure}[t]
	\centering
	\includegraphics[height=\figuresize]{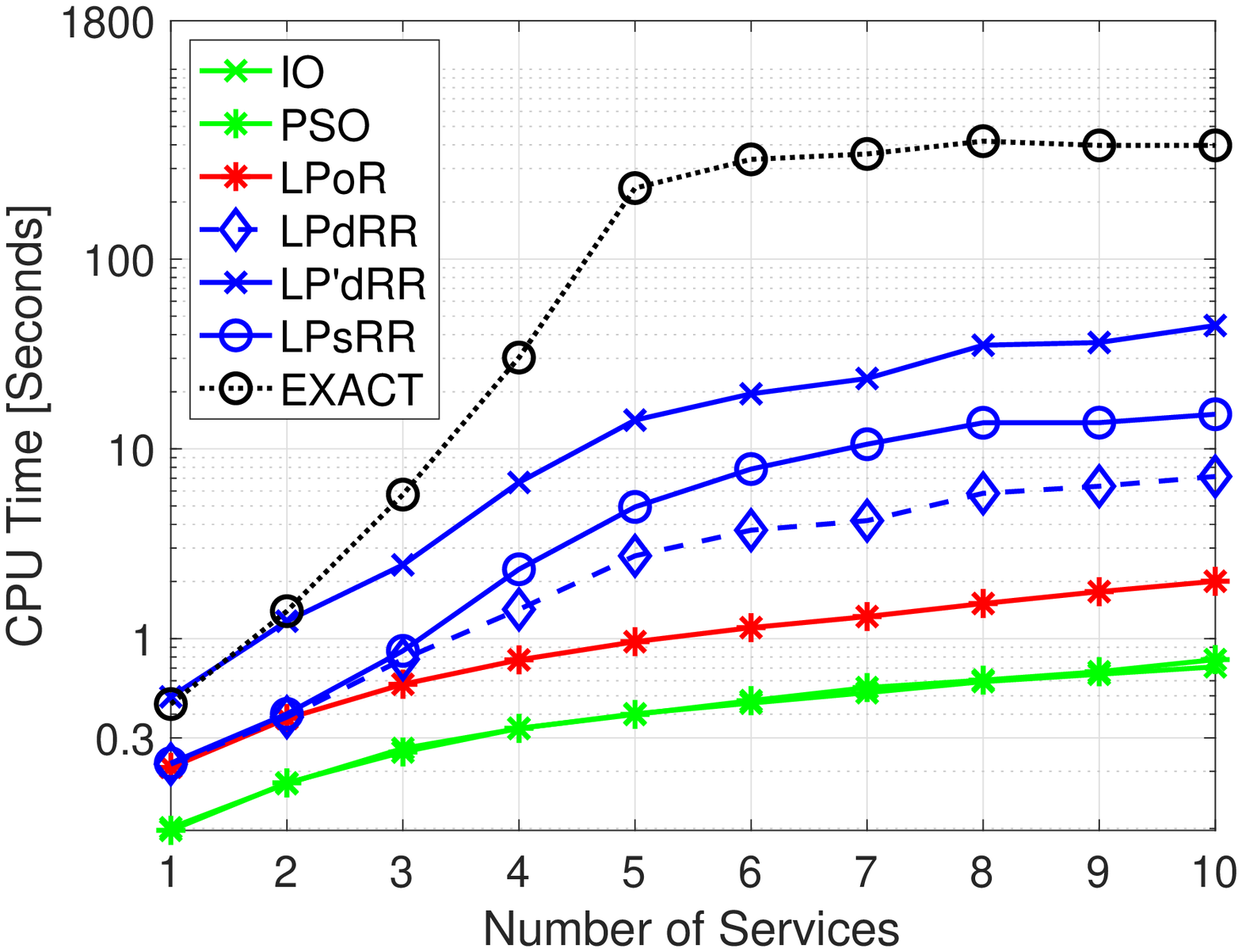}
	\caption{Average CPU time of \revv{\GREEDY, \PSO, \LPR, \LPRR, \LPRRS, \LPRRSS, and \EXACT.}}
	\label{time}
\end{figure}
\subsubsection{Efficiency of $\LPRR$}
\revv{The comparison of the solution time of \GREEDY, \PSO, \LPR, \LPRR, \LPRRS, \LPRRSS, and \EXACT is plot in Fig. \ref{time}.
First, as expected, \LPRR is much more computationally efficient than \LPRRS and \LPRRSS, especially when the number of services is large.
The efficiency of \LPRR over \LPRRS mainly comes from the fact that solving the proposed LP relaxation is much faster than solving the natural LP relaxation, as demonstrated in Fig. \ref{LPtime}, while the efficiency of \LPRR over \LPRRSS is due to the fact that the number of solved LPs in \LPRR is much smaller than that solved in \LPRRSS, as will be illustrated in Fig. \ref{niter}.}
Second, compared with \EXACT, \LPRR~is significantly more efficient and the solution efficiency of \GREEDY, \revv{\PSO}, \LPR, and \LPRR~is comparable.
Overall, \LPR and \LPRR are several times slower than \revv{\GREEDY and \PSO}. 

%
%

\begin{figure}[t]
	\centering
	\includegraphics[height=\figuresize]{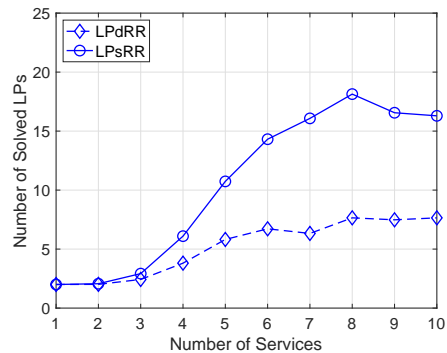}
	\caption{{Average number of solved LPs in \revv{\LPRR and \LPRRSS}.}}
	\label{niter}
\end{figure}

To gain more insight into the computational efficiency of the proposed algorithms \revv{\LPRR and \LPRRSS}, we plot the number of solved LPs in these two algorithms in Fig. \ref{niter}.
From this figure, we observe that the number of solved LPs in \LPRR increases with the number of services, \revv{but is much smaller than its theoretical upper bound in \eqref{upperLP}}.
For instance, when $|\mathcal{K}|= 10$, the average number of solved LPs in $\LPRR$ is \revv{$7.6$} while its upper bound for is $6*(10*3)+10=190$.
This clearly shows that the proposed $\LPRR$ algorithm works well in practice.
In addition, compared with \revv{\LPRRSS}, \LPRR solves a smaller number of LPs, which is the main reason why \LPRR is faster than \revv{\LPRRSS}.

\subsection{Summary of Simulation Findings}
From the above simulation results, we can conclude that:
\begin{itemize}
	\item [$\bullet$]
	      Compared with the natural LP relaxation \eqref{milplp}, our proposed relaxation \eqref{lp} is much stronger and more compact, which plays a crucial role in the effectiveness and efficiency of our proposed \LPRR algorithm.
	\item [$\bullet$] 
	      When comparing with the existing state-of-the-art algorithms in \cite{Chen2020}, \cite{Chowdhury2012}, \cite{Yu2008}, and \cite{Zhang2012}, our proposed $\LPRR$ achieves a better trade-off between the solution quality and the computational efficiency.
	      More specifically, compared with \revv{\GREEDY in \cite{Yu2008}, \PSO  in \cite{Zhang2012}, and \LPR} in \cite{Chowdhury2012}, it is able to find a much better solution; compared with EXACT in \cite{Chen2020}, it is much more computationally efficient.
\end{itemize}

\section{Conclusions}
\label{sec:conclusion}

In this paper, we have proposed an efficient algorithm called \LPRR for solving the network slicing problem.
The proposed algorithm is a two-stage LP based algorithm that places the virtual network functions of all services into cloud nodes using an LP \revv{dynamic} rounding procedure in the first stage and finds the traffic routing strategies of all services using an iterative LP refinement procedure in the second stage.
Three key features of the proposed algorithm, which make it particularly suitable to solve the large-scale network slicing problems with a high-quality solution, are: (i) it is based on a newly proposed LP relaxation \eqref{lp}, which, compared with the natural LP relaxation \eqref{milplp}, is much stronger (in terms of providing a better LP bound) and more compact (in terms of smaller numbers of variables and constraints); and (ii) it takes the global information of the problem (i.e., the traffic routing of all services) into consideration in the first stage (in contrast to \cite{Chowdhury2012} which only takes the local information of the problem); (iii) the worst-case complexity of the proposed algorithm is polynomial.
Simulation results show that our proposed algorithm is able to achieve a better trade-off between the solution quality and the computational efficiency than the existing state-of-the-art algorithms.

\appendices
\section{Details of Example \ref{example1}}
\label{detailsofex1}

Here we will show that solving the nonlinear relaxation \eqref{nlp} will return a solution with the communication delay being $1.5$, 
solving the natural LP relaxation \eqref{milplp} will return a solution with the communication delay being $1.25$, 
\revvv{and solving the novel LP relaxation \eqref{lp} will return a solution with the communication delay being $1.5$}.

To distinguish the two links in Fig. \ref{example}, we denote the top one with communication delay $1$ and the bottom one with communication delay $2$ as $a$ and $b$, respectively.
Let $\theta$ denote the delay incurred by the traffic flow from node $S$ to node $D$.
For $p=1,2$, let $r(p)$ denote the data rate on the $p$-th path, $z_a(p)$ and $z_{b}(p)$ denote whether or not links $a$ and $b$ are used by the $p$-th path, respectively, and $r_a(p)$ and $r_{b}(p)$ denote the associated data rate on links $a$ and $b$, respectively.
The nonlinear relaxation \eqref{nlp} for this example reduces to
\begin{equation}
	\begin{aligned}
	\theta_{\text{NLP}} := & \min_{\boldsymbol{r},\,\boldsymbol{z},\,{\theta}} ~\theta                                                         \\
	                       & {~~\text{s.t.}}~~                                                             \text{\eqref{exflowcons1}--\eqref{exvar}}, ~\text{where} \label{exnlp}
	 \end{aligned}
\end{equation}
\vspace*{-0.6cm}
\begin{multicols}{2}
	\noindent
	\begin{align}
		  & z_{a}(1)+z_{b}(1)=1, \label{exflowcons1}            \\
		  & z_{a}(2)+z_{b}(2)=1,\label{exflowcons2}             \\
		  & r_{a}(1) + r_{a}(2) \leq 0.5,\label{excap1}         \\
		  & r_{b}(1) + r_{b}(2) \leq 0.5,\label{excap2}\\ 
		  & r(1)+r(2) = 1,\label{extotal}         
	\end{align}
	\begin{align}
		  & r_{a}(1)=r(1)z_{a}(1), \label{exrrz1} \\
		  & r_{b}(1)=r(1)z_{b}(1), \label{exrrz2} \\
		  & r_{a}(2)=r(2)z_{a}(2), \label{exrrz3} \\
		  & r_{b}(2)=r(2)z_{b}(2),\label{exrrz4}
	\end{align}
\end{multicols}
\vspace{-1.05cm}
\begin{align}
	& \revv{\theta =\max\{ z_{a}(1) + 2z_{b}(1),z_{a}(2) + 2z_{b}(2)\}},\label{nlpdelay}\\
	& 0\leq r(p), r_a(p), r_{b}(p), z_{a}(p), z_{b}(p)\leq 1, ~p \in \{1,2\}.	\label{exvar}
\end{align}
\noindent The natural LP relaxation \eqref{milplp} for this example reduces to
\begin{equation}
\begin{aligned}
	\theta_{\text{LP-I}} := & \min_{\boldsymbol{r},\,\boldsymbol{z},\,{\theta}} ~  \theta                                                                                                            \\
	                        & {~~\text{s.t.}} ~~ \text{\eqref{exflowcons1}--\eqref{extotal}}, \text{\eqref{exvar},~\eqref{exlin11}--\eqref{exdelay2}}, ~\text{where}~ \label{exmilplp}  
\end{aligned}	       
\end{equation}
\vspace*{-0.6cm}
\begin{figure*}
	\begin{multicols}{3}
		\noindent
		\begin{align}
			r_a(1) & \leq r(1), \label{exlin11}
			\\
			r_b(1) & \leq r(1), \label{exlin21}
			\\
			r_a(2) & \leq r(2), \label{exlin31}
			\\
			r_b(2) & \leq r(2), \label{exlin41}
		\end{align}
		\begin{align}
			r_a(1) & \leq z_{a}(1),  \label{exlin12}
			\\
			r_b(1) & \leq z_{b}(1),  \label{exlin22}
			\\
			r_a(2) & \leq z_{a}(2),  \label{exlin32}
			\\
			r_b(2) & \leq z_{b}(2),  \label{exlin42}
		\end{align}
		\begin{align}
			r_a(1) & \geq r(1)+z_a(1)-1,\label{exlin13}
			\\
			r_b(1) & \geq r(1)+z_b(1)-1,\label{exlin23}
			\\
			r_a(2) & \geq r(2)+z_a(2)-1,\label{exlin33}
			\\
			r_b(2) & \geq r(2)+z_b(2)-1.\label{exlin43}
		\end{align}
	\end{multicols}
\vspace*{-5mm}
\hrulefill
\vspace*{-5mm}
\end{figure*}

\begin{multicols}{2}
	\noindent
	\begin{align}
		& \revv{\theta \geq z_{a}(1) + 2z_{b}(1)}, \label{exdelay1}     
	\end{align}
	\begin{align}
		& \revv{\theta \geq z_{a}(2) + 2z_{b}(2)}. \label{exdelay2}
	\end{align}
\end{multicols}

\revv{
\noindent The novel LP relaxation \eqref{lp} reduces to 
\begin{equation}
	\begin{aligned}
		\theta_{\text{LP-II}} := & \min_{\boldsymbol{r},\,{\theta}} ~\theta                                                         \\
		& {~~\text{s.t.}}~~                                                             
		\text{\eqref{exdelay1-1}--\eqref{excap2-1}}, ~\text{where} \label{exlp2}
	\end{aligned}
\end{equation}
\begin{multicols}{2}
	\noindent
	\begin{align}
		& \theta = r_{a}(1) + 2r_{b}(1), \label{exdelay1-1}  \\
		& r_{a}(1)+r_{b}(1)=1, \label{exflowcons1-1}            
	\end{align}
	\begin{align}
		& r_{a}(1)  \leq 0.5,\label{excap1-1}         \\
		& r_{b}(1)  \leq 0.5,\label{excap2-1}
	\end{align}
\end{multicols}
Below we shall show that $ \theta_{\text{NLP}} = 1.5 $,
$ \theta_{\text{LP-I}} = 1.25 $, and $ \theta_{\text{LP-II}} = 1.5 $, separately.}\\
{\noindent$\bullet$~Proof of $\theta_{\text{NLP}} = 1.5$.\vspace{0.1cm}\\}
First, we give a point $(\bar{\boldsymbol{r}},\bar{\boldsymbol{z}},\bar{{\theta}})$ as follows:
\begin{equation*}
	\begin{aligned}
		  &   &   & \bar{\theta}=1.5, ~ \bar{r}(1)=1, ~ \bar{r}(2) = 0,                        \\
		  &   &   & \bar{z}_{a}(1) =\bar{z}_{b}(1) =0.5, ~	\bar{r}_{a}(1) =\bar{r}_{b}(1) =0.5, \\
		  &   &   & \bar{z}_{a}(2) =\bar{z}_{b}(2) =0.5, ~	\bar{r}_{a}(2) =\bar{r}_{b}(2) =0.
	\end{aligned}
\end{equation*}
Obviously, $(\bar{\boldsymbol{r}},\bar{\boldsymbol{z}},\bar{{\theta}})$ is feasible for problem \eqref{exnlp}, and hence we have $\theta_{\text{NLP}} \leq 1.5$.
It remains to show $\theta_{\text{NLP}} \geq 1.5$.
Indeed, summing up constraints \eqref{exrrz1}--\eqref{exrrz4} and using constraints \eqref{exflowcons1}, \eqref{exflowcons2}, and \eqref{extotal}, we have
\begin{equation*}
	r_{a}(1)+r_{b}(1)+r_{a}(2)+r_{b}(2)\qquad\qquad\qquad\qquad\qquad
\end{equation*}
\begin{equation*}
	\begin{aligned}
		  & =r(1)(z_{a}(1)+z_{b}(1))+r(2)(z_{a}(2)+z_{b}(2)) \\
		  & =r(1)+r(2)=1.
	\end{aligned}
\end{equation*}
This implies that constraints \eqref{excap1} and \eqref{excap2} must hold with equalities:
\begin{align}
	r_{a}(1) + r_{a}(2) & = 0.5,\nonumber
	\\ 
	r_{b}(1) + r_{b}(2) & = 0.5.\label{excap21}
\end{align}
Substituting \eqref{exrrz2} and \eqref{exrrz4} into  \eqref{excap21}, we have
\begin{equation*}
	r_{b}(1) + r_{b}(2)= r(1)z_{b}(1)+ r(2)z_{b}(2) = 0.5, 
\end{equation*}
which, together with constraint \eqref{extotal}, immediately shows $\max\{ z_b(1), z_{b}(2) \} \geq 0.5$.
Then, it follows from constraints \revv{\text{\eqref{exflowcons1}--\eqref{exflowcons2}} and \eqref{nlpdelay}} that 
\begin{equation*}
	\begin{aligned}
		\theta_{\text{NLP}} & \revv{=} \max\{ z_{a}(1)+ 2z_{b}(1), z_{a}(2)+ 2z_{b}(2) \} \\
		                    & =   \max\{ 1+z_{b}(1), 1+z_{b}(2) \}\geq 1.5.
	\end{aligned}
\end{equation*}
{\noindent$\bullet$~Proof of $\theta_{\text{LP-I}} = 1.25                                                                                                                                                                                                                 $.\vspace{0.1cm}\\}
Let  $(\hat{\boldsymbol{r}},\hat{\boldsymbol{z}},\hat{{\theta}})$ be given as follows:
\begin{equation*}
	\begin{aligned}
		  &   &   & \hat{\theta}=1.25, ~ \hat{r}(1)=0.5, ~ \hat{r}(2) = 0.5,                               \\
		  &   &   & \hat{z}_{a}(1) =0.75,~\hat{z}_{b}(1) =0.25, ~	\hat{r}_{a}(1) =0.25,~\hat{r}_{b}(1)=0,   \\
		  &   &   & \hat{z}_{a}(2) = 0.75,~\hat{z}_{b}(2) =0.25, ~	\hat{r}_{a}(2) =0.25,~\hat{r}_{b}(2) =0.
	\end{aligned}
\end{equation*}
It is simple to check that $(\hat{\boldsymbol{r}},\hat{\boldsymbol{z}},\hat{{\theta}})$ is feasible for problem \eqref{exmilplp} (though we cannot find a traffic routing strategy based on this solution).
As a result, we have $\theta_{\text{LP-I}} \leq 1.25$.
To seek a solution with a better objective value for problem \eqref{exmilplp}, we may assume that
\begin{equation*}
	\begin{aligned}
		z_{a}(1) + 2z_{b}(1) & = 1+ z_{b}(1) \leq 1.25,  \\
		z_{a}(2) + 2z_{b}(2) & = 1+ z_{b}(2) \leq 1.25.
	\end{aligned}
\end{equation*}
Notice that the above equalities follow from \eqref{exflowcons1} and \eqref{exflowcons2}.
Then, we have $z_{b}(1) \leq 0.25$ and $z_{b}(2) \leq 0.25$, and it follows from constraints \eqref{exflowcons1} and \eqref{exflowcons2} that $z_{a}(1)\geq 0.75$ and $ z_{a}(2) \geq 0.75$.
Summing up constraints \eqref{exlin13} and \eqref{exlin33} yields
\begin{equation*}
	\begin{aligned}
		r_a(1) + r_a(2) & \geq  r(1)+ r(2)+ z_{a}(1)+ z_{a}(2) - 2                    \\
		                & = 1+ z_{a}(1)+ z_{a}(2) - 2 ~ (\text{from}~\eqref{extotal}) \\
		                & = z_{a}(1)+ z_{a}(2) -1                                     \\
		                & \geq 0.5.~ (\text{from}~z_{a}(1)\geq 0.75~\text{and}~z_{a}(2)	\geq 0.75)
	\end{aligned}
\end{equation*}
By constraint \eqref{excap1}, the above inequality must hold with equality, and hence we have $z_{a}(1)= z_{a}(2) = 0.75$, or equivalently $z_{b}(1) =  z_{b}(2) = 0.25$.
As a result, $\theta_{\text{LP-I}} = 1.25$.\\[5pt]
\revv{
	{\noindent$\bullet$~Proof of $\theta_{\text{LP-II}} = 1.5                                                                                                                                                                                                                 $.\vspace{0.1cm}\\}
	It is easy to see  that point $(\tilde{\boldsymbol{r}},\tilde{{\theta}})$ defined by
	\begin{equation*}
		\begin{aligned}
			\tilde{\theta}=1.5, ~\tilde{r}_a(1) =\tilde{r}_{b}(1)=0.5
		\end{aligned}
	\end{equation*}
	is an optimal solution of relaxation \eqref{exlp2}, implying  $\theta_{\text{LP-II}} = 1.5$.\\[5pt]
}
\rev{{\bf Acknowledgments} We would like to thank the \revv{three} anonymous reviewers for their insightful comments.}

\end{document}